\theoremstyle{plain}
\newtheorem{theorem}{Theorem}[section]
\newtheorem{lemma}[theorem]{Lemma}
\newtheorem{corollary}[theorem]{Corollary}
\theoremstyle{definition}
\newtheorem{definition}{Definition}
\newtheorem{example}{Example}
\newtheorem{remark}{Remark}
\begin{document}

\begin{frontmatter}

\title{Generation of discrete random variables in scalable frameworks}
\runtitle{Discrete random variables in scalable frameworks}

\author{\fnms{Giacomo} \snm{Aletti}\corref{}\ead[label=e1]{giacomo.aletti@unimi.it}\thanksref{t1}}
\address{ADAMSS Center, Universit\`a degli Studi di Milano, \\
20131 Milano, Italy
\\ \printead{e1}}
\thankstext{t1}{Member of ``Gruppo Nazionale per il Calcolo Scientifico (GNCS)'' 
of the Italian Institute ``Istituto Nazionale di Alta Matematica (INdAM)''.
This work was partially developed during a visiting
research period at the Volgenau School of Engineering,
George Mason University, Fairfax (VA), United States. It was revised
during a visiting research period at the School of Mathematical Science,
Fudan University, Shanghai, China. The author thanks for the hospitality.} 

\runauthor{Giacomo Aletti}

\begin{abstract}
In this paper, we face the problem of simulating discrete random variables with general and varying distributions in a scalable framework, where fully parallelizable operations should be preferred.
The new paradigm is inspired by the context of discrete choice models. Compared to classical algorithms, we add parallelized randomness,  and we leave the final simulation of the random variable to a single associative operation.
We characterize the set of algorithms that work in this way, and those algorithms that may have an additive or multiplicative local noise. As a consequence, we could define a natural way to solve some popular simulation problems.
\end{abstract}

\begin{keyword}[class=MSC]
\kwd[Primary ]{62D05}
\kwd[; secondary ]{65C10; 68A55}
\end{keyword}

\begin{keyword}
\kwd{Discrete random number generation}
\kwd{discrete choice model}
\kwd{scalable framework}
\kwd{parallelizable algorithm}
\end{keyword}

\end{frontmatter}

\section{Introduction}

The aim of this paper is to define and to characterize a new method 
for the generation of discrete random variables in a scalable framework.
This is done by merging two apparently different fields, 
namely the discrete random variables generation
and the discrete choice framework. 

The generation of discrete random variables 
may be made in different ways, see 
\cite{Devroye86,Fast2004,Rub16,DRNG_2013} 
and the references therein. 
The most popular  idea is to invert the cumulative function $F_k$ defined on the sets of the indexes $k=1,\ldots,n$
of the support. When the cumulative function is not parametrized,
we recall that a bisection search
takes $O(\log_2(n))$ comparison to invert $F$ (see \cite[Section III.2.4]{Devroye86}), once we have computed and stored
the table $\{(k,F_k),k=1,\ldots,n\}$ properly. A traditional linear search may be done with $O(n)$ comparison.

There are more sophisticated ways to invert a discrete distribution.
They typically require other precomputations and bookkeepings.
We recall here three fast popular methods. Even if these three methods might be dated,
the most recent books do not provide other paradigms that generate directly discrete random variables 
with general distributions (cfr., for example, \cite{Rub16}). 

The \emph{table look-up method} is the fastest way to simulate a large number of i.i.d.\ discrete random variables
(see  \cite[Section III.3]{Devroye86} and \cite{Fast2004}),
when all the probabilities $\{p_j = \frac{n_j}{m} ,j=1,\ldots,n\}$ are rational numbers 
with the same denominator $m$.
Obviously, $n_1+n_2+\cdots+n_N = m$, and hence we may set up a table $\{(k,Q_k),k=1,\ldots, m\}$ such that
$n_j$ distinct values of $Q$'s are set to $j$, for each $j=1,\ldots,n$. To simulate the required random variable,
take an integer uniform discrete random variable $Y$ on $\{1,\ldots,m\}$ and then compute $Q_Y$ in $O(1)$ time.
The drawback of this method compared to the previous ones is the amount of required space, that is necessary to store
the exact pseudo-inverse function $Q_j = F^{-1}(\frac{j}{m})$ on the equispaced nodes $\{\frac{1}{m} ,\ldots, 1\}$.

The \emph{method of guide tables} was introduced by \cite{chen74}, and stores 
a second ``guiding table''  $\{(k,G_k),k=1,\ldots,m\}$ that helps the generation, by reducing the
expected number of comparison to less than $ 1 + \tfrac{n}{m}$, see \cite[Section III.3.4]{Devroye86}.

The last method is called \emph{alias table} (see \cite[Section III.4]{Devroye86}). It was firstly
introduced in \cite{walker74,walker77}, and then it was improved together with a simple probabilistic proof in \cite{KroPet79}.
This method does not need the computation of the cumulative function even if it requires the probabilities of the events to be normalized. 
Besides this, it uses a special table build in $O(n)$ time (called alias table).
During the simulation process, it uses only $2$ comparisons.

In all these methods, the simulation of a random variable is made by constructing 
a table based on the \emph{probabilities} of the possible events. In many applied situations,
the probabilities are computed up to a multiplicative constant, or in logarithm scale up to the translational constant.
Accordingly, a preprocess must be done to reconstruct the normalized probabilities before using any of the above methods.
Moreover, if one has to simulate random variables with different distributions, it must be allocated a different table for each of them.
Finally, if the probability of one of the events changes (or if the support itself changes), one must restart all the process.

The novelty of this paper is the introduction of a new method for the generation of several discrete
independent random variables with possible different distributions, and 
whose distributions does not belong to a parametric family. A key point is the fact that this method 
does not precompute a table based on the probability of the events. Instead, it is
%
based on the following two assumptions:
\begin{itemize}
\item first, for each random variable, and for each point of the support, 
we can perform an action (called utility) that involves the sole local accessible information. 
This operation is hence fully parallelizable, and we do not need to take care of scaling constants 
when the probabilities are given in logarithm space and/or up to a constant;
\item secondly, for each random variable, a single associative operation on the utilities on the points 
of its support must  finally simulate  the discrete random variable.
\end{itemize}
This new method may be also be updated in a fast natural way when the probabilities 
are changing with time
(one local update and an associative operation),
and hence it may also be used in real time problems. 
The main results of this paper is the characterization of all the possible
ways of simulating a discrete random variables with such assumptions.

Obviously, there is a counterpart. On the one hand, in fact, we do not provide a precomputed table 
and we perform only one associative operation
on some locally calculated quantities. On the other hand, each of these quantities
depends on an independent source of uncertainty. Summing up,
we increment the total amount of randomness (by adding a local source), 
and hence we could reduce the non-local operations to an associative one. 

The idea behind this new method may found a counterpart in 
the framework of discrete choice models,
where the point of view is to understand the behavioral process that leads 
to an agent's choice among a set of possible actions,
see \cite{Kenneth09} for a recent book on this subject. 
The researcher knows the set of the possible actions, and by
observing some factors, he may infer something about the agent's preferences. 
At the same time, he cannot observe other random factors, linked to each possible action, that cause the final decision. 
If the researcher could have observed these hidden factors, he could have predicted the action 
chosen by the agent by selecting 
the one with maximum utility function. 

The process of choice selection has the two characteristics we gave above for random
generation: 
\begin{itemize}
\item for each agent (random variable), for each action (point of the support), 
the utility function -a given deterministic function of the observed and the hidden factors- is calculated and depends only on local variables;
\item for each agent, the final choice is made by selecting the action with maximum utility value (associative operation).
\end{itemize}
In other words, we are changing 
the usual point of view belonging to discrete choice framework to produce 
and characterize new scalable
simulators for discrete random variables, 
based on primary functions given, e.g., 
in a general \texttt{SQL} database.
As a by product, 
we will be able to characterize 
all the choice models with independent and
identically distributed hidden factors and such that the probability
of choosing an action is (proportional to) a given function of the observable factors.

The content of the paper is structured as follows. In the Example~\ref{exa:dB} of 
Section~\ref{sec:examples}, we introduce a very general problem of randomization in classification procedures
in a \texttt{SQL} environment, and we solve it with our new paradigm.
The subsequent Example~\ref{exa:dcm} shows the mathematical position of the same problem in discrete choice's 
framework. The reason why the two examples shares the same problem is discussed at the end of the examples.
The section ends with a discussion on the fast updating process that is required when the distributions vary with time.

In Section~\ref{sec:th_res} we give the main results of the paper,
based on the notion of $\max$-compatible family of distributions, which is the mathematical structure at the base of
our new method. 
This family is fully characterized in terms of the cumulative functions in Theorem~\ref{thm:main_repr}, 
that can be seen as the main mathematical result of this paper.
The section continues with the description of the new algorithm of 
random variable generation in terms of max-compatible families, 
and it ends with
the characterization of some natural models 
that may be found in 
usual applied situations.

Section~\ref{sec:concl} concludes the paper with some future research regarding this topic.

All the proofs of the results are referred to the appendix, 
and they are preceded by some general theoretical results on 
real continuous distributions.

\section{Motivating examples}\label{sec:examples}

In this section, we show two examples. In the first one, we show how the new method may be used to randomize
a Bayesian classifier in a \texttt{SQL} environment. In this example, the data are stored 
in a table called \texttt{Mytable}, with columns \texttt{ID, QUAL, Strength}.
The column \texttt{ID} identifies $N$ different users (or documents, or images, \ldots), 
\texttt{QUAL} refers to a quality of the user,
\texttt{Strength} is a real number that exhibits how much the quality \texttt{QUAL} is expressed
by the user \texttt{ID}. Note that different users may espress different qualities with different strength;
we only assume that in \texttt{Mytable} there are not two rows with the same couple \texttt{ID, QUAL}.

In the second example, we give a perspective of our new paradigm in terms of the \emph{discrete choice model}'s
framework. In this framework, the following objects are defined:
\begin{itemize}
\item the \emph{choice set} $\{x_l, l = 1, \ldots\}$, which is the set of options that are available to the $N$ decision makers;
\item the \emph{consumer utility law}, which is a function that assigns to each decision maker and each option
the utility that each decision will bring to the player. 
Here, we assume that the utility laws of different players are independent among each other.
We label the decision maker by $t\in\{1,\ldots,N\}$, and we denote by $x_{j}$ its $j$-th alternative  
among a set of finite number of alternatives $\{x_{1}, \ldots , x_{n}\}$ in the
choice set. 
The utility $X_{t\,j}= \hat{g}(s(t,x_j) , \epsilon_{t\,j})$ is based on two parts:
\begin{enumerate}
\item the first one, labeled $s(t, x_j)$, that is known;
\item the second part $\epsilon_{t\,j}$, random. 
Here $\{\{\epsilon_{t\,1}, \ldots , \epsilon_{t\,n}\}, t=1,\ldots,N\}$
are independent families of independent random variables, all with a common law $\epsilon$ that does not depend on $j$ and $t$;
\end{enumerate}
\item the \emph{choice} $\hat{x}_n$ of the $n$-th decision maker, derived from utility-maximizing procedure:
 \[
 \hat{x}_t= \mathop{\arg\max}_{\{x_{1}, \ldots , x_{n}\}} X_{t\,j} = 
 \mathop{\arg\max}_{\{x_{1}, \ldots , x_{n}\}} \hat{g}(s(t,x_j) , \epsilon_{n\,j});
 \]
\item the \emph{choice probabilities} $\{p_{t\,j}, j = 1,\ldots, n\}$, derived from utility-maximizing behavior: 
\[
p_{t\,j} = P( \hat{x}_t = x_{j}) .
\] 
\end{itemize}

\begin{example}[Randomized classification]\label{exa:dB}
In a Multinomial naive Bayes classifier problem, each user is assumed to generate 
a sample $\mathbf{y}=(y_1, \ldots, y_M)$ that depends on the $\texttt{QUAL}$
that it is expressing. Each $y_{i}$ counts the number of times event $i$ was observed, and the joint
probability 
is
\[
P( \mathbf{y} | {\texttt{QUAL}=x} ) =  \frac{(\sum_i y_i)!}{\prod_i y_i !} \prod_i {P(i|{\texttt{QUAL}=x}) }^{y_i},
\]
where ${P(i|{\texttt{QUAL}=x}) }$ is the probability that event $i$ occurs under $\texttt{QUAL}=x$.
This is the event model typically used for document classification, see, e.g.\ \cite{Hand01,Rennie03}.

The multinomial naive Bayes classifier computes \texttt{Strength} as the log-likelihood function up to a constant:
\begin{equation}\label{eq:defStrength}
\log P( {\texttt{QUAL}=x} | \mathbf{y} ) = \text{const} + \underbrace{\log P( {\texttt{QUAL}=x} ) +
\sum_i y_i \log P(i|{\texttt{QUAL}=x})}_{\text{\texttt{Strength} of \texttt{QUAL}$=x$}} ,
\end{equation}
and then it selects, for each user, the quality \texttt{QUAL} with the higher \texttt{Strength}.
Of course, one could select a \texttt{QUAL} randomly from each user with the same probability, 
just adding a random column 
\texttt{RND} generated by \texttt{RAND()} to \texttt{Mytable}, and use it for the selection instead of
\texttt{Strength}. 
Among a lot of equivalent expressions, once we have set a proper index on \texttt{Mytable}, 
the two procedures might be set directly in a fast \texttt{SQL} query, see Listing~\ref{lst:SQLclass} and
Listing~\ref{lst:SQLrand}.


\noindent\begin{minipage}{.45\textwidth}
\begin{lstlisting}[
           language=SQL,
basicstyle=\footnotesize\ttfamily,
numbers=left,
numbersep=10pt,
xleftmargin=20pt,
frame=tb,
framexleftmargin=20pt,
caption={\texttt{SQL} code for multiclass classifier} \label{lst:SQLclass}
        ]
SELECT a.ID, a.QUAL
FROM MyTable a
LEFT OUTER JOIN MyTable b
    ON a.ID = b.ID
     AND a.Strength < b.Strength
WHERE b.id IS NULL;
\end{lstlisting}
\end{minipage}\hfill
\begin{minipage}{.45\textwidth}
\begin{lstlisting}[
           language=SQL,
basicstyle=\footnotesize\ttfamily,
numbers=left,
numbersep=10pt,
xleftmargin=20pt,
frame=tb,
framexleftmargin=20pt,
caption={\texttt{SQL} code for uniformly random selection} \label{lst:SQLrand}
        ]
SELECT a.ID, a.QUAL
FROM MyTable a
LEFT OUTER JOIN MyTable b
    ON a.ID = b.ID
     AND a.RND < b.RND
WHERE b.id IS NULL;
\end{lstlisting}
\end{minipage}



This last random selection shows, in fact, a paradigm to generate a random variable different from the usual ones.
In fact, for each point of the support, the algorithm generates the uniform random variable \texttt{RND}
(local fully parallelizable action),
independently of the rest, and then it selects the point with the higher \texttt{RND} (single associative operation).

The question of this paper is how to select a \texttt{QUAL} randomly from each user with a probability 
proportional to $\exp(\texttt{Strength})$, or, more generally, $f(\texttt{Strength})$,
where $f:\mathbb{R}\to\mathbb{R}_+$ is a given non-negative function.
It is obvious that, when $f(x) = 1$ (or any other constant $c$) for any $x$, 
the new selection should return a procedure equivalent to Listing~\ref{lst:SQLrand}. 

\begin{table}
\caption{Generation of four discrete random variables, with different distributions,
within the settings of Example~\ref{exa:dB}.
Left table: example of \texttt{Mytable}, where
the quantity \texttt{RND2} is computed in fully parallelized way: it 
depends on the corresponding \texttt{Strength} and the result of the \texttt{SQL} function \texttt{RAND()}. 
Right: \texttt{SQL} code.
Bottom: the generation of the random variables is made by selecting the \texttt{QUAL} with the higher
\texttt{RND2}, for each \texttt{ID}.}\label{tab:codeSQL}

\smallskip

\begin{tiny}
\noindent\begin{minipage}{.45\textwidth}
\centering
\begin{tabular}{|c|c|c|c|}
\hline
\texttt{ID}&
\texttt{QUAL}&
\texttt{Strength}&
\texttt{RND2}\\
\hline
\#1 & YELLOW & -1 & 0,664834081 \\
\#4 & PURPLE & -4 & -4,426142579 \\
\#1 & WHITE & 2 & 2,926653411 \\
\#1 & RED & 2 & 5,612483956 \\
\#3 & CYAN & -1 & -2,501775035 \\
\#4 & WHITE & -3 & -3,131509289 \\
\#3 & WHITE & 0 & 0,524126732 \\
\#2 & RED & 1 & 1,30338907 \\
\#4 & YELLOW & 1 & 3,083588566 \\
\#1 & ORANGE & 5 & 5,603956288 \\
\#4 & CYAN & 0 & 1,66402363 \\
\#2 & WHITE & 4 & 4,143186699 \\
\#2 & CYAN & 5 & 3,77108384 \\
\#2 & ORANGE & 0 & 1,682024182 \\
\hline
\end{tabular}
\end{minipage}
\end{tiny}
\hfill + \hfill
\begin{minipage}{.45\textwidth}
\begin{lstlisting}[
           language=SQL,
basicstyle=\footnotesize\ttfamily,
numbers=left,
numbersep=10pt,
xleftmargin=20pt,
frame=tb,
framexleftmargin=20pt,
caption={\texttt{SQL} code for discrete random variable generation based on \texttt{Mytable}}
\label{lst:SQL_multRand}
       ]
SELECT a.ID, a.QUAL
FROM MyTable a
LEFT OUTER JOIN MyTable b
    ON a.ID = b.ID
     AND a.RND2 < b.RND2
WHERE b.id IS NULL;
\end{lstlisting}
\end{minipage}
\\[.2cm]
\centering
$\Downarrow$  \\[.2cm]
\centering
\begin{footnotesize}
\begin{tabular}{|c|c|}
\hline
\texttt{ID}&
\texttt{QUAL}\\
\hline
\#1 & RED \\
\#2 & WHITE \\
\#3 & WHITE \\
\#4 & YELLOW \\
\hline
\end{tabular}
\end{footnotesize}
\end{table}

To solve this problem, 
the idea is to merge the information in the two codes above. First, we define 
a new column \texttt{RND2}  (see Table~\ref{tab:codeSQL}) given by
${g}( f(\texttt{Strength}),\texttt{RND})$, where $g:\mathbb{R}_+\times (0,1) \to \mathbb {R}$
is a suitable function. This procedure is fully parallelizable and scalable.
Then, the solution of the problem will be performed with the code in Listing~\ref{lst:SQL_multRand}
which selects, for each user, the quality \texttt{QUAL} with the higher \texttt{RND2}.
Of course, the column \texttt{RND2} (defined by $g$) 
must produce the desired result. In other words, if a user expresses the three qualities $\texttt{QUAL}_1$,
$\texttt{QUAL}_2$ and $\texttt{QUAL}_3$ with corresponding streghts $\texttt{Strength}_1$,
$\texttt{Strength}_2$ and $\texttt{Strength}_3$ we must be sure, for example, that the first
quality is selected proportionally to $e^{\texttt{Strength}_1}$, that is
\begin{multline}\label{exa1:prob}
P( {g}( e^{\texttt{Strength}_1},U_1) > \max({g}( e^{\texttt{Strength}_2},U_2),
{g}( e^{\texttt{Strength}_3},U_3) ) 
= \frac{e^{\texttt{Strength}_1}}{
e^{\texttt{Strength}_1}+e^{\texttt{Strength}_2}+e^{\texttt{Strength}_3}},
\end{multline}
where $U_1,U_2,U_3$ are three independent uniform random variables.
To achieve this task, we will characterize in Theorem~\ref{thm:main_repr} all the functions
$Q_{f(\texttt{Strength})} (U) = {g}( f(\texttt{Strength}), U) $ for which 
the solution of the problem may be coded as in Listing~\ref{lst:SQL_multRand}.

%
\end{example}

\begin{remark}
When $f(x) = c$ for any $x$, the column $\texttt{RND2} = {g}( f(\texttt{Strength}),\texttt{RND}) $ is independent of 
$\texttt{Strength}$. Then the result of 
Listing~\ref{lst:SQL_multRand} is equivalent to that of Listing~\ref{lst:SQLrand},
if the probability of having
the same value for two different $\texttt{QUAL}$s is null. 
The fact that $\texttt{RND2}$ \emph{must} have
a continuous distribution (as the uniform \texttt{RAND()}) is proved in Lemma~\ref{lem:Falpha_cont}.
\end{remark}

\begin{example}[Probit model and choice probabilities in discrete choice framework]\label{exa:dcm}
With the notation of the  discrete choice model's framework given above,
we are interested here in characterizing all the common laws $\epsilon$ and 
the utility laws $X_{t\,j}= \hat{g}(s(t,x_j) , \epsilon_{t\,j})$ 
which give a preassigned choice probabilities $\{p_{t\,j}, j = 1,\ldots, n\}$. 
For example, when the law of $\epsilon$
is a Gumbel distribution and $\hat{g} ( s,u) = s+u$, the model is called \emph{probit}.
In this case, it is known (see \cite{Kenneth09}) that 
\begin{equation}\label{eq:probit}
p_{t\,j} = \frac{ \exp(s(t,x_j)) }{ \sum_{k=1}^n \exp(s(t,x_k)) }.
\end{equation}
\end{example}

Example~\ref{exa:dB} and Example~\ref{exa:dcm} are clearly linked: each user, identified by \texttt{ID} in the first example, 
is one of the $N$ decision makers in the second example. 
Each $x_j$ in Example~\ref{exa:dcm} represents a quality \texttt{QUAL} in Example~\ref{exa:dB}.
The known quantity $s(t, x_j)$ in Example~\ref{exa:dcm} is expressed by $\texttt{Strength}$ in  Example~\ref{exa:dB}.
The uniformly distributed random variable \texttt{RND} may be transformed into $\epsilon$ (and vice-versa,
as a consequence of Corollary~\ref{cor:Falpha_to_U}), so that
$X_{t\,j}$ in Example~\ref{exa:dcm} corresponds to \texttt{RND2} in Example~\ref{exa:dB}.

Notably, the equation \eqref{eq:probit} shows a possible solution for the randomized Bayesian classifier, where 
\texttt{Strength} is defined in \eqref{eq:defStrength}. In fact, it is sufficient to take
$\texttt{RND2} = \texttt{Strength} - \log(-\log(\texttt{RAND()}) )$, 
since $-\log(-\log(\texttt{RAND()}))$ is distributed as a Gumbel random variable (see also Remark~\ref{rem:probit-canonic}).

\subsection{Generation with distributions that vary with time}
Let us come back to the Example~\ref{exa:dB}, and suppose that the distributions 
vary during the time. 
We recall here that each row is identified by the couple $\texttt{ID} | \texttt{QUAL}$.
If we update the value of \texttt{Strength} and \texttt{RND2} 
in a row of \texttt{Mytable}, we are changing the probability of the corresponding event; 
if we add or remove
some rows that correspond to an \texttt{ID}, then we are changing the suppport of
its discrete random variable; if we add a row with a new \texttt{ID}, we are adding a new
random variable.

When we deal distributions that vary with time, it is convenient to store also the maximum value of \texttt{RND2} 
during the process of generation of the random variables. 
Accordingly, let us assume that the row $(\#1 | \textrm{RED} | 5,612483956)$
is present in the table at the bottom of Table~\ref{tab:codeSQL}. 

If a new query adds the row $(\texttt{ID}= \#1 |  \texttt{QUAL} = q |  \texttt{Strength}= s | \texttt{RND2}=r)$ 
in \texttt{Mytable}, 
or it updates the existing row 
to it,
then the generation of the random variable correspondent to $\texttt{ID}= \#1$
is changed according to the following table:

\medskip

\begin{tabular}{r|c|c}
& $q\neq \textrm{RED}$ & $q= \textrm{RED}$ \\
\hline
$r < 5,612483956$ & 
do nothing 
&
\multicolumn{1}{ c| }{select afresh the maximum for ${\texttt{ID}}= \#1$}
\\
\hline
$r>5,612483956$ & 
\multicolumn{2}{ c| }{update into the bottom table $(\#1 | q | r )$} 
\\ \cline{2-3}
\end{tabular}

\medskip

Note that the entire associative procedure is required only when
$q= \textrm{RED}$ and $r < 5,612483956$, and it is applied 
only to the subset with ${\texttt{ID}}= \#1$.
Of course, this task is also necessary if a query deletes from \texttt{Mytable}
the entire row $(\#1 | \textrm{RED} | 2 | 5,612483956)$. 
No updating process is required after the deletion of any
row $(\#1 |  \texttt{QUAL}|  \texttt{Strength}| \texttt{RND2})$, whenever $\texttt{QUAL}\neq \textrm{RED}$.

Finally, if a query adds a row with a new $\texttt{ID}$ to \texttt{Mytable}, the corresponding-generated
 $(\texttt{ID} | \texttt{QUAL} | \texttt{RND2})$ is immediately added to the simulation table.

\section{Theoretical and applied results}\label{sec:th_res}

In the sequel $F,F_X,F_{\alpha},\ldots$ will always denote cumulative 
distributions on $\mathbb R$,
while $X,X_{\alpha}, \ldots$ denote random variables on $\mathbb R$.
$X_1\sim X_2$ means that $X_1$ and $X_2$ share the same distribution, while 
$X\cong F$  means that the random variable $X$ has cumulative function $F$, also denoted
 by $F_X$. $U$ denotes always the random variable with uniform distribution on $(0,1)$. 
We will denote by $Q:(0,1) \to \mathbb{R}$ the quantile function associate to
a cumulative function $F$ in the following way:
\[
Q(u) = \inf\{ x \in\mathbb{R}\colon F(x)>u \} = \sup\{ y \in\mathbb{R}\colon F(y)\leq u \} .
\]

We now introduce the parametric family of probability distributions that are compatible with the associative operator ``max''. 
We require that the maximum value may be reached at each realization of any subsets of the family, proportionally to the parameters of the distributions that have been selected from the family and have generated the sample. 
\begin{definition}\label{defi:max-compatible}
Let $\mathcal C= \{F_\alpha,\alpha > 0\}$ be a parametric family of real probability distributions. 
The family $\mathcal C $ is called $\max$-compatible if, for any $n \geq 2$, whenever
$X_{\alpha_i} \cong F_{\alpha_i}, i=1,\ldots,n$ 
are independent random variables, we always have that
\begin{equation}\label{eq:def_max-compatible}
P\big( X_{\alpha_1} >\max(X_{\alpha_2},\ldots,X_{\alpha_n}) \big) = \frac{\alpha_1}{\sum_{i=1}^n\alpha_i} .
\end{equation}
The family $\mathcal C $ is called $\min$-compatible if 
$\max$ and $>$ are replaced by $\min$ and $<$  in \eqref{eq:def_max-compatible}.
\end{definition}
\begin{remark}
The key equation \eqref{eq:def_max-compatible} is the mathematical 
characterization of \eqref{exa1:prob} of Example~\ref{exa:dB}, with 
$\alpha_i = e^{\texttt{Strength}_i}$ and
$X_{\alpha_i}= {g}( e^{\texttt{Strength}_i},U_i)$.
\end{remark}
\begin{remark}\label{rem:cont}
When $X$ and $Y$ are independent, it is well known that $F_{\max(X,Y)}(t)=F_{X}(t)F_{Y}(t)$.
Since the product of continuous functions is continuous, 
the distribution function of $\max(X_{\alpha_2},\ldots,X_{\alpha_n}) $
is continuous whenever $X_{\alpha_2},\ldots,X_{\alpha_n}$ belong to a $\max$-compatible family,
by Lemma~\ref{lem:Falpha_cont}. 
It is hence possible to replace $>$ with $\geq$ in \eqref{eq:def_max-compatible}.
\end{remark}

We now state the following theorem, that characterizes all the $\max$-compatible families.
In particular, \ref{rap:c} ensures the associative property of the family 
and \ref{rap:d} characterizes the dependence of the cumulative functions with respect to the
parameter $\alpha$.

\begin{theorem}[Representation of $\max$-compatible families]\label{thm:main_repr}
Let $\mathcal C= \{F_\alpha,\alpha > 0\}$ be a parametric family of real continuous probability distributions. 
The following statements are equivalent:
\begin{enumerate}[label={(\alph*)}]
\item\label{rap:a} the family $\mathcal C$ is $\max$-compatible;
\item\label{rap:a1} for 
any monotone increasing function $h:\mathbb R\to \mathbb R$, the family 
\[
\mathcal C' = \{F'_\alpha(t) = F_{\alpha}(h(t)) , \alpha > 0\} , \qquad (\text{where } F_{\alpha} \in \mathcal C)
\] 
is $\max$-compatible;
\item\label{rap:b} the family 
\[
\mathcal C' = \{F'_\alpha(t) = 1-F_{\alpha}(-t) , \alpha > 0\} , \qquad (\text{where } F_{\alpha} \in \mathcal C)
\] 
is $\min$-compatible;
\item\label{rap:c} whenever
$X_{\alpha_1} \cong F_{\alpha_1}$ and 
$X_{\alpha_2} \cong F_{\alpha_2}$ 
are independent random variables, we always have that
\begin{equation}\label{eq:def_compat}
X_{\alpha_1+\alpha_2} \sim \max(X_{\alpha_1} , X_{\alpha_2} ),
\qquad \text{where $X_{\alpha_1+\alpha_2} \cong F_{\alpha_1+\alpha_2}$};
\end{equation}
\item\label{rap:d} for any $\alpha >0$, $F_\alpha (t) = (F(t))^\alpha$, 
where $F$ is any continuous cumulative distribution function, whence $F= F_1$;
\item\label{rap:d2} there exists a strictly increasing quantile function $Q:(0,1)\to\mathbb{R}$
such that, for any $\alpha >0$, $X_\alpha \sim Q(\sqrt[\alpha]{U})= Q_{\alpha}(U)$, where
$U$ is a $(0,1)$-uniformly distributed random variable (and hence $Q= Q_1$).
\end{enumerate}
\end{theorem}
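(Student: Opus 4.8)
The plan is to take statement \ref{rap:d} as the hub and prove the cycle $\ref{rap:d}\Rightarrow\ref{rap:c}\Rightarrow\ref{rap:a}\Rightarrow\ref{rap:d}$, then attach \ref{rap:a1}, \ref{rap:b}, \ref{rap:d2} by short direct arguments. Throughout I lean on the elementary fact (Remark~\ref{rem:cont}) that for independent $X,Y$ one has $F_{\max(X,Y)}=F_XF_Y$, together with the continuity of every $F_\alpha$ (Lemma~\ref{lem:Falpha_cont}).

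The implication $\ref{rap:d}\Rightarrow\ref{rap:c}$ is immediate: if $F_\alpha=F^\alpha$ then $\max(X_{\alpha_1},X_{\alpha_2})$ has cdf $F^{\alpha_1}F^{\alpha_2}=F^{\alpha_1+\alpha_2}=F_{\alpha_1+\alpha_2}$. Conversely, \ref{rap:c} states exactly that $F_{\alpha_1+\alpha_2}=F_{\alpha_1}F_{\alpha_2}$ for all $\alpha_1,\alpha_2>0$; fixing $t$ and writing $g(\alpha)=F_\alpha(t)\in[0,1]$ turns this into the multiplicative Cauchy equation $g(\alpha_1+\alpha_2)=g(\alpha_1)g(\alpha_2)$, whose bounded solutions are $g(\alpha)=g(1)^\alpha$, giving $F_\alpha=F_1^\alpha$ with $F:=F_1$ continuous, i.e.\ \ref{rap:d}. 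Hence $\ref{rap:c}\Leftrightarrow\ref{rap:d}$. For $\ref{rap:d}\Rightarrow\ref{rap:a}$ I substitute $v=F(t)$ (legitimate since $F$ is continuous and atomless) to obtain $P(X_{\alpha_1}>\max(X_{\alpha_2},\dots,X_{\alpha_n}))=\int F^{\alpha_2+\cdots+\alpha_n}\,dF^{\alpha_1}=\alpha_1\int_0^1 v^{\alpha_1+\cdots+\alpha_n-1}\,dv=\alpha_1/\sum_i\alpha_i$.

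The heart of the proof is $\ref{rap:a}\Rightarrow\ref{rap:d}$, which I would reduce to a moment problem. Applying \ref{rap:a} to one variable of parameter $\alpha$ together with $k$ independent copies of parameter $\beta$ (so $n=k+1$), and using $F_{\max}=F_\beta^{k}$, gives $E[F_\beta(X_\alpha)^k]=\int F_\beta^{k}\,dF_\alpha=\frac{\alpha}{\alpha+k\beta}=\frac{c}{c+k}$ with $c=\alpha/\beta$, for every integer $k\ge 0$. These are precisely the moments on $[0,1]$ of the $\mathrm{Beta}(c,1)$ law, whose cdf is $w\mapsto w^{c}$, so by the Hausdorff moment problem the law of $F_\beta(X_\alpha)$ is that $\mathrm{Beta}(c,1)$, i.e.\ $P(F_\beta(X_\alpha)\le w)=w^{\alpha/\beta}$. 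Taking $\beta=1$ and $F=F_1$ yields $P(F(X_\alpha)\le w)=w^{\alpha}$, and the inclusions $\{F(X_\alpha)<F(t)\}\subseteq\{X_\alpha\le t\}\subseteq\{F(X_\alpha)\le F(t)\}$, valid because $F$ is increasing, upgrade this to $F_\alpha(t)=F(t)^\alpha$ for every $t$. The step I expect to be the main obstacle is exactly this last upgrade: when $F$ is not strictly increasing the map $v=F(t)$ is not invertible, so the identification cannot be made by a change of variables and must go through the inclusion argument, where the continuity of $F$ is precisely what forces the two squeezing probabilities $P(F(X_\alpha)<F(t))$ and $P(F(X_\alpha)\le F(t))$ to coincide.

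Finally I would attach the three remaining items. For $\ref{rap:a}\Leftrightarrow\ref{rap:a1}$: a monotone increasing $h$ induces $X'_\alpha\sim h^{-1}(X_\alpha)$ and preserves strict order, so the winning probabilities in \eqref{eq:def_max-compatible} are unchanged; thus \ref{rap:a} gives \ref{rap:a1}, while $h=\mathrm{id}$ gives the converse (any ties created by a non-strict $h$ are null by continuity). For $\ref{rap:a}\Leftrightarrow\ref{rap:b}$: reflection corresponds to $X\mapsto-X$, under which $1-F_\alpha(-t)$ is the cdf of $-X_\alpha$ (continuity removes the left-limit correction) and $(\max,>)$ become $(\min,<)$, so the two compatibility statements describe the same event probability. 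For $\ref{rap:d}\Leftrightarrow\ref{rap:d2}$: the quantile function $Q$ of a continuous $F$ is strictly increasing, and since $U^{1/\alpha}$ has cdf $w^\alpha$ on $(0,1)$ we get $P(Q(\sqrt[\alpha]{U})\le t)=P(U^{1/\alpha}\le F(t))=F(t)^\alpha=F_\alpha(t)$, so $X_\alpha\sim Q(\sqrt[\alpha]{U})=Q_\alpha(U)$; conversely a strictly increasing $Q$ determines a continuous $F$ and reverses the computation. This closes all the equivalences.
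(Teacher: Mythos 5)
Your proof is correct, but it decomposes the hard direction differently from the paper, and the difference is worth recording. Both arguments run on the same engine: test one variable against $k$ i.i.d.\ competitors to extract a moment identity on $[0,1]$, then solve the multiplicative Cauchy equation for the dependence on $\alpha$. The paper, however, proves \ref{rap:a}$\Rightarrow$\ref{rap:c} first: fixing $\alpha_1,\alpha_2$, it writes $G$ for the cdf of $\max(X_{\alpha_1},X_{\alpha_2})$, tests against $n$ auxiliary copies of parameter $\alpha_1+\alpha_2$ to get $\int_{\mathbb{R}} G\,(F_{\alpha_1+\alpha_2})^n\,dF_{\alpha_1+\alpha_2}=\tfrac{1}{n+2}$, and identifies $G=F_{\alpha_1+\alpha_2}$ by an ad hoc inversion result (Theorem~\ref{thm:inversion}), proved via best $L^2$ polynomial approximation; only then does the Cauchy equation convert \ref{rap:c} into the representation \ref{rap:d}. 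You go straight from \ref{rap:a} to \ref{rap:d}: testing $X_\alpha$ against $k$ copies of parameter $\beta$ gives $E[F_\beta(X_\alpha)^k]=c/(c+k)$ with $c=\alpha/\beta$, and the classical Hausdorff moment theorem identifies the law of $F_1(X_\alpha)$ as $\mathrm{Beta}(\alpha,1)$; your squeeze $\{F(X_\alpha)<F(t)\}\subseteq\{X_\alpha\le t\}\subseteq\{F(X_\alpha)\le F(t)\}$, with continuity of $w\mapsto w^\alpha$ collapsing the two ends, then yields $F_\alpha=F_1^\alpha$ pointwise even when $F_1$ is not strictly increasing --- exactly the degeneracy the paper instead handles inside Theorem~\ref{thm:inversion} via $Q(F(x))=x$ $F$-a.e. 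What each buys: the paper is self-contained (no appeal to an external moment theorem), and its route through \ref{rap:c} foregrounds the semigroup/associativity property that the algorithm actually exploits; your route is shorter, outsources uniqueness to a standard classical result, and the identification of $F_1(X_\alpha)$ as a $\mathrm{Beta}(\alpha,1)$ variable is a pleasant conceptual byproduct. Two smaller differences: your \ref{rap:d}$\Rightarrow$\ref{rap:a} is a single change of variables for general $n$, whereas the paper proves the two-variable case and then chains it through auxiliary variables $Y_m$ using \eqref{eq:def_compat}; and in \ref{rap:a1} your observation that the transformed variable is $h^{-1}(X_\alpha)$ is the correct way round (the paper works with $h(X_{\alpha_i})$, same idea). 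The treatments of \ref{rap:b} and \ref{rap:d2} coincide with the paper's.
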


\subsection{Generation of discrete random variables}
The conditions \ref{rap:d} and \ref{rap:d2} in Theorem~\ref{thm:main_repr} characterizes 
the cumulative functions $F_\alpha$ and the quantile functions $Q_\alpha$ 
of any $\max$-compatible family, in terms of 
the cumulative $F_1$ and quantile $Q_1$ functions, that can be freely chosen.
In particular, the family $\mathcal C= \{t^\alpha\mathbbm{1}_{(0,1)}(t),\alpha > 0\}$ may be seen as
`the canonical one', since it is build starting from the uniform distribution. In this case, if 
$X_\alpha \cong F_\alpha(t) = t^\alpha\mathbbm{1}_{(0,1)}(t)$, then $X$ may be generated by setting
$X_\alpha = Q_\alpha(U) = \sqrt[\alpha]{U}$, with $U$ uniform. 
\begin{remark}\label{rem:probit-canonic}
If we take $f(\texttt{Strength}) = \alpha$ and $g(\alpha,u) = Q_1(\sqrt[\alpha]{u})$, then
$\widehat{\texttt{RND2}} = (\texttt{RAND()})^{\frac{1}{\alpha}} $ is the canonical solution 
with the code in Listing~\ref{lst:SQL_multRand}.
The solution 
$$
{\texttt{RND2}} = \texttt{Strength} - \log(-\log(\texttt{RAND()}) ) = - \log(-\log( \widehat{\texttt{RND2}})),
$$ 
given at the end of Section~\ref{sec:examples} for $\alpha = e^{\texttt{Strength}}$, 
is based on a monotone transformation that does not change 
the selection of the maximum point.
\end{remark}



From a computational point of view, it must be underlined that both the operations
$\alpha = f(\texttt{Strength})$ and $\sqrt[\alpha]{u}$ may lead to unexpected precision errors.
The freedom in choosing the quantile function $Q_1$ helps us to face this problem.
One the one hand, it may transform the problem on a different scale, and thus avoiding the 
transformation of \texttt{Strength}. On the other hand, it will imply the transformation
of the uniform random variable $U$. This last operation may be done sometimes in 
a fast and ad hoc way (see, e.g., \cite{Marsaglia00}).
For the first purpose, we now underline some ``special families'' of distributions. 
The first one is useful when one records \texttt{Strength} as a linear
transformation of $\log \alpha$, as in Example~\ref{exa:dB}, and 
exponentiating it may cause errors.
The other two families deal with records of the order of $\alpha^c$ and of $\alpha^{-c}$. 
The functional forms of $Q_{\alpha} (U)$ for these families are shown in 
Table~\ref{tab:transf} in terms of $\hat{g}( \texttt{Strength}, U)$. 


\begin{table}[thb]
\centering
\caption{Three different models for the generation of
discrete random variables,
for different scales of $\alpha$ of the recorded data \texttt{Strength}.
} 
\label{tab:transf}
\begin{tabular}{ l | l | c || c}
  \hline			
  \texttt{Strength} ($c>0$) & noise 
  & model & $\hat{g}( \texttt{Strength}, U) $ \\ \hline
  $s = c\log \alpha + d$ & $G \cong $ Gumbel  &  $s + c\cdot G$ 
  &
  $\texttt{Strength} - c \log (-\log U)$ \\
  $s = d\alpha^c $ & $G \cong $ Fr\'echet   &  $|s|  \cdot G^c$ &
  $|\texttt{Strength}| \cdot (-\log U)^{-c} $ \\
  $s = d\alpha^{-c} $ & $G \cong $ Neg.Exp.  &  $-|s|  \cdot G^{c}$   &
  $-|\texttt{Strength}| \cdot (-\log U)^{c} $ \\ \hline
\end{tabular}
\end{table}

\subsubsection{Gumbel family, Type 1} \label{sec:Gumb1}
The quantile function $Q(u) = -\log( -\log (u) )$ refers to the cumulative distribution 
$F(t) = e^{-e^{-t}}\mathbbm{1}_{(0,\infty)}(t)$ of the standard Gumbel distribution. 
In this case $X_\alpha = Q(\sqrt[\alpha]{U}) = \log \alpha + G $, where $G$ is a
standard Gumbel distribution, is a Gumbel distribution with mode $\log \alpha$. 

The Gumbel family $\{  e^{-\alpha e^{-t}} \mathbbm{1}_{(0,\infty)}(t), \alpha > 0\}$ 
is essentially the unique $\max$-compatible family with additive noise,
as the following theorem states. 
\begin{theorem}[Additive noise]\label{thm:add_noise}
The $\max$-compatible families with additive noise, i.e.\ 
where $X_\alpha = f(\alpha) + Q_1(U)$, are of the form
\[
X_\alpha = c ( \log(\alpha) + d ) + c G, \qquad c>0, d\in \mathbb{R}, G \cong \text{Gumbel}.
\]
\end{theorem}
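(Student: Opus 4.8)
The plan is to reduce the statement to a functional equation for the noise distribution and then identify its solutions via the logarithmic Cauchy equation. Write $N = Q_1(U)$ for the common noise, with cumulative function $F_N$ (continuous, by Lemma~\ref{lem:Falpha_cont}), so that the additive hypothesis $X_\alpha = f(\alpha) + N$ gives $F_\alpha(t) = F_N(t - f(\alpha))$. On the other hand, Theorem~\ref{thm:main_repr}\ref{rap:d} tells us that a $\max$-compatible family must satisfy $F_\alpha(t) = F_1(t)^\alpha$; evaluating at $\alpha = 1$ identifies $F_1(t) = F_N(t - f(1))$, and substituting back yields, after the change of variable $s = t - f(1)$ and with $g(\alpha) := f(1) - f(\alpha)$,
\[
F_N\big(s + g(\alpha)\big) = F_N(s)^{\alpha}, \qquad s\in\mathbb R,\ \alpha>0 .
\]
Setting $\phi := -\log F_N$ (a nonnegative function on the support) this reads $\phi(s + g(\alpha)) = \alpha\,\phi(s)$, which will be the workhorse of the argument.

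First I would rule out flat parts of $F_N$. If $F_N$ were constant equal to $v\in(0,1)$ on an interval $I$ of positive length, the displayed identity would force $F_N$ to be constant equal to $v^{\alpha}$ on the translated interval $I + g(\alpha)$; letting $\alpha$ range over $(0,\infty)$ would produce uncountably many disjoint intervals of fixed positive length on which $F_N$ is constant at the distinct levels $v^{\alpha}\in(0,1)$, which is impossible on $\mathbb R$. Hence $F_N$ is strictly increasing on its support and $\phi$ is injective. Comparing $\phi(s + g(\alpha_1\alpha_2)) = \alpha_1\alpha_2\,\phi(s)$ with the iterate $\phi(s + g(\alpha_1) + g(\alpha_2)) = \alpha_1\alpha_2\,\phi(s)$ and using injectivity gives the multiplicative-to-additive relation $g(\alpha_1\alpha_2) = g(\alpha_1) + g(\alpha_2)$. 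Because $\phi$ is positive and strictly decreasing, the relation $\phi(s+g(\alpha)) = \alpha\,\phi(s)$ also forces $g$ itself to be strictly decreasing (e.g.\ $\alpha>1$ forces $g(\alpha)<0$), so a strictly monotone solution of the logarithmic Cauchy equation is necessarily $g(\alpha) = -c\log\alpha$ for some $c>0$.

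It then remains to read off $F_N$ and $f$. Plugging $g(\alpha) = -c\log\alpha$ into $\phi(s + g(\alpha)) = \alpha\,\phi(s)$ and writing $\alpha = e^{-s/c}$ after a shift shows $\phi(s) = \phi(0)\,e^{-s/c}$ on all of $\mathbb R$ — the equation also forces the support to be the whole line, since fixing $s$ in the support makes $s - c\log\alpha$ sweep $\mathbb R$ as $\alpha$ varies. Consequently $F_N(s) = \exp(-K e^{-s/c})$ with $K = -\log F_N(0)>0$, which is exactly the cumulative function of $c\,G + c\log K$ for $G$ a standard Gumbel variable. Finally $f(\alpha) = f(1) - g(\alpha) = f(1) + c\log\alpha$, so
\[
X_\alpha = f(\alpha) + N = c\log\alpha + \big(f(1) + c\log K\big) + cG = c(\log\alpha + d) + cG,
\]
with $d = f(1)/c + \log K$, as claimed; conversely, that this family is $\max$-compatible with additive noise is the computation already recorded for the Gumbel family of Type~1 in Section~\ref{sec:Gumb1}. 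The only delicate point is the regularity needed to pin down the Cauchy equation, and I expect the main obstacle to be justifying that the sole relevant solutions are the logarithmic ones; this is why establishing strict monotonicity of $F_N$ (hence of $g$) directly from the functional equation, rather than assuming any smoothness of $f$, is the crux of the proof.
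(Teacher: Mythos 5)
Your proof is correct, but it runs through a different decomposition than the paper's. The paper works on the quantile side: after normalizing ($f(1)=0$, $Q_1(\tfrac1e)=0$), it uses $F_\alpha(X_\alpha)\sim U$ together with \ref{rap:d} of Theorem~\ref{thm:main_repr} to get $F_1(f(\alpha)+Q_1(U))^\alpha=U$, pins down the shift explicitly as $f(\alpha)=Q_1(\exp(-\tfrac1\alpha))$ by evaluating at $u=\tfrac1e$, and then obtains a functional equation in $Q_1$ alone, namely $Q_1(v)+Q_1(u)=Q_1(\exp(-\log u\log v))$, which the substitution $g(x)=Q_1(e^{-x})$ turns into Cauchy's equation $g(s)+g(t)=g(st)$; monotone solutions give the Gumbel quantile. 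You instead work on the distribution-function side: the two-variable identity $F_N(s+g(\alpha))=F_N(s)^\alpha$, from which you extract Cauchy's logarithmic equation for the shift $g$ via an injectivity argument, and only afterwards recover $F_N$. The two routes meet at the same landmark (monotone solutions of the logarithmic Cauchy equation), but the order of determination is reversed: the paper determines $f$ first and then solves for $Q_1$, you determine $g$ (hence $f$ up to constants) first and then solve for $F_N$. What the paper's route buys is that the flat-part issue you had to handle never arises: since $F_1$ is continuous, its quantile function is automatically strictly increasing (this is exactly what \ref{rap:d2} records), so injectivity is free. What your route buys is that it needs no normalization trick and no special evaluation point, and it makes explicit two structural facts the paper leaves implicit: that the support of the noise must be all of $\mathbb{R}$, and the closed form $F_N(s)=\exp(-Ke^{-s/c})$. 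Your disjoint-translates argument ruling out flat parts of $F_N$, and the subsequent injectivity step (both compared points have $\phi$-values in $(0,\infty)$, where $\phi$ is strictly decreasing), are sound, so the extra work your approach requires is carried out correctly; your appeal to Section~\ref{sec:Gumb1} for the converse inclusion matches what the paper does as well.
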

This characterization may be immediately extended to the context of discrete choice models.
\begin{corollary}[Characterization of additive discrete choice model]
The probit model of the Example~\ref{exa:dcm} is the unique discrete choice model for which
${g} ( v,u) = v+u$, and, in this case $f(t,x_j) = c\log(p_{t\,j}) + d$, where $c>0$ and 
$d$ are real constant.
This means that, if the law of $\epsilon$ is not a Gumbel distribution, then there does not
exists a function $f = f(p)$ for which 
the utilities $U_{\alpha}= f(p_\alpha) + c\epsilon$ are generated with a $\max$-compatible family
and \eqref{eq:probit} holds.
\end{corollary}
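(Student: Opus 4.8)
The plan is to read the hypotheses of the corollary as the statement that we are handed a $\max$-compatible family with \emph{additive} noise, and then to invoke Theorem~\ref{thm:add_noise} almost directly.

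First I would make the reduction explicit. By hypothesis the utilities $U_\alpha = f(p_\alpha) + c\epsilon$ are generated by a $\max$-compatible family, so there is a parameter, call it $\beta$, for which the selection probability of an alternative equals $\beta/\sum_i\beta_i$; requiring this to reproduce the prescribed choice probability forces $\beta \propto p_\alpha$. Reparametrising the family by $\beta$ and absorbing the proportionality constant, the utilities take the form $U_\beta = \tilde f(\beta) + c\epsilon$, where $\tilde f$ is deterministic and the noise $c\epsilon$ does not depend on $\beta$. This is exactly an additive-noise $\max$-compatible family in the sense of Theorem~\ref{thm:add_noise}, with $\tilde f$ playing the role of $f(\alpha)$ and $c\epsilon$ the role of $Q_1(U)$.

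The core step is the appeal to Theorem~\ref{thm:add_noise}, which forces $U_\beta = c'(\log\beta + d') + c'G$ for some $c'>0$, $d'\in\mathbb R$ and a standard Gumbel $G$. Matching the $\beta$-free random summand gives $c\epsilon \sim c'G + (\text{const})$, so $\epsilon$ is a location-scale transform of a standard Gumbel and is therefore itself a Gumbel law; contrapositively, if $\epsilon$ is not Gumbel then no admissible $f$ can exist, which is the displayed assertion. Matching the deterministic summand gives $\tilde f(\beta) = c'\log\beta + (\text{const})$, and since $\beta\propto p_\alpha$ we may write $\log\beta = \log p_\alpha + (\text{const})$, so that $f(p_{t\,j}) = c\log(p_{t\,j}) + d$ after renaming $c',d'$ and the carried constants. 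That the Gumbel model is the \emph{unique} additive possibility is inherited from the ``essentially unique'' clause of Theorem~\ref{thm:add_noise}, while the sufficiency (Gumbel noise does yield \eqref{eq:probit}) is already recorded in Example~\ref{exa:dcm}.

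The part I expect to require the most care is the bookkeeping of additive constants, rather than any genuine difficulty. The decomposition of a utility into a deterministic shift plus a fixed noise is not unique, because a constant may be moved freely between $\tilde f(\beta)$ and the location of $\epsilon$; I would fix this ambiguity once and for all by normalising the noise (say, by pinning its mode), after which $d'$ and the location of $G$ are determined and the relation $f(p)=c\log p + d$ is unambiguous. A secondary verification is that the parameter $\beta$ sweeps the whole half-line $(0,\infty)$, so that Theorem~\ref{thm:add_noise} applies to the full family $\{F_\beta,\beta>0\}$ and not merely to a sub-range; this is guaranteed as soon as the prescribed probabilities $p_\alpha$ (equivalently the strengths $s(t,x_j)$) are allowed to vary over their natural ranges.
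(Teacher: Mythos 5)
Your proposal is correct and follows essentially the same route as the paper, which offers no separate proof of this corollary but presents it as an immediate consequence of Theorem~\ref{thm:add_noise} (``This characterization may be immediately extended to the context of discrete choice models''). Your reduction --- identifying the max-compatibility parameter $\beta\propto p_\alpha$ via \eqref{eq:def_max-compatible} and \eqref{eq:probit}, then matching the deterministic and random summands against the form $c(\log\alpha+d)+cG$ forced by Theorem~\ref{thm:add_noise} --- is exactly the intended argument, with the bookkeeping of additive constants and of the parameter range made explicit.
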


\subsubsection{Gumbel family, Type 2}
If we substitute in \ref{rap:d} of Theorem~\ref{thm:main_repr} 
the cumulative distribution function of a Fr\'echet distribution
$F_1(t) = e^{-\tfrac{1}{t}} \mathbbm{1}_{(0,\infty)}(t)$,
the $\max$-compatible family that we obtain is the Type-2 Gumbel 
distribution family $\{F_\alpha(t) = e^{-\tfrac{\alpha}{t}} \mathbbm{1}_{(0,\infty)}(t), \alpha>0\}$.
The quantile function that generates the Fr\'echet distribution is of the form
 $Q_1(u) = -\tfrac{1}{\log (u) }$. The notable thing is that the generation of
 $X_\alpha\cong F_\alpha$ is done proportionally to $\alpha$:
 $X_\alpha = \alpha (  -\tfrac{1}{\log (U)} )$.

\subsubsection{Negative Exponential distribution}

When $X_\alpha = -\frac{\log(U)}{\alpha} $
is distributed as a negative Exponential distribution with parameter $\alpha$, then 
$F_\alpha(t) = (1-\exp(-\alpha t)) \mathbbm{1}_{(0,\infty)}(t)$.
Note that, by  \ref{rap:d} of Theorem~\ref{thm:main_repr}, 
\[
1-F_\alpha(-t)  = (\exp(t) \mathbbm{1}_{(-\infty ,0)}(t))^\alpha 
\] is a
$\max$-compatible family, and hence the 
Exponential distribution family 
is a 
$\min$-compatible family
by \ref{rap:b} of Theorem~\ref{thm:main_repr}.

The next theorem characterize the $\max$-compatible families with multiplicative noise, 
in terms of the last two $\max$-compatible families seen above. 
\begin{theorem}[Multiplicative noise]\label{thm:char_dep}
The $\max$-compatible families with multiplicative noise, i.e.\ 
where $X_\alpha = f(\alpha) Q_1(U)$, are of the form
\[
X_\alpha = d \alpha ^ c G^c, \qquad c, d \neq 0.
\]
In addition,
\begin{enumerate}
\item
if $c>0$, then $d>0$ and $G \cong \text{Fr\'echet}$;
\item
if $c<0$, then $d<0$ and $G \cong \text{Exponential}$.
\end{enumerate}
In particular, when $c=d=1$, 
$\mathcal{C}=\{  e^{-\tfrac{\alpha c }{t}} \mathbbm{1}_{(0,\infty)}(t), \alpha > 0\}$
is the Type-2 Gumbel family, 
and in this case $f(\alpha) = \alpha$.
When $c=d=-1$, 
$\mathcal{C}=\{  (\exp(t) \mathbbm{1}_{(-\infty ,0)}(t))^\alpha , \alpha > 0\}$
is the opposite of a exponential family, 
and in this case $f(\alpha) = \alpha^{-1}$.

\end{theorem}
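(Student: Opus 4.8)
The plan is to start from the quantile representation \ref{rap:d2} of Theorem~\ref{thm:main_repr}, impose the multiplicative structure, and reduce the problem to a single multiplicative Cauchy functional equation whose monotone solutions are power functions. By \ref{rap:d2} there is a strictly increasing quantile function $Q_1$ with $X_\alpha\sim Q_1(\sqrt[\alpha]{U})$, so that the quantile function of $X_\alpha$ is $u\mapsto Q_1(u^{1/\alpha})$ (indeed $u\mapsto u^{1/\alpha}$ is increasing and $Q_1$ is increasing, and one checks $F_{X_\alpha}=F_1^{\alpha}$ in agreement with \ref{rap:d}). The multiplicative-noise hypothesis $X_\alpha=f(\alpha)\,Q_1(U)$, read as an identity of increasing functions of the common uniform $U$, then forces
\[
Q_1\!\left(u^{1/\alpha}\right)=f(\alpha)\,Q_1(u),\qquad u\in(0,1),\ \alpha>0 .
\]
Since the left-hand side is increasing in $u$ and $Q_1$ is increasing, necessarily $f(\alpha)>0$ for every $\alpha$, and putting $\alpha=1$ gives $f(1)=1$.

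First I would linearise the action $u\mapsto u^{1/\alpha}$ by the substitution $u=e^{-s}$, $s>0$. Setting $\psi(s):=Q_1(e^{-s})$, the displayed relation becomes
\[
\psi(s/\alpha)=f(\alpha)\,\psi(s),\qquad s>0,\ \alpha>0 .
\]
Taking $s=1$ shows $\psi(1)\neq 0$ (otherwise $\psi\equiv 0$ on $(0,\infty)$, contradicting strict monotonicity of $Q_1$), and that $f(\alpha)=\psi(1/\alpha)/\psi(1)$. Normalising $g:=\psi/\psi(1)$, so that $g(1)=1$ and $f(\alpha)=g(1/\alpha)$, the functional equation collapses, after writing $a=1/\alpha$, to the multiplicative Cauchy equation
\[
g(sa)=g(s)\,g(a),\qquad s,a>0 .
\]
Because $Q_1$ is strictly increasing and $s\mapsto e^{-s}$ is strictly decreasing, $\psi$ and hence $g$ are strictly monotone.

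Next I would solve this Cauchy equation using only monotonicity (no continuity is assumed): from $g(s^2)=g(s)^2\ge 0$ one gets $g\ge 0$, and $g$ cannot vanish since $g(1)=1$, so $g>0$; then $\log g$ satisfies the additive Cauchy equation and is monotone, hence linear, giving $g(s)=s^{\gamma}$ for some $\gamma\neq 0$ (the value $\gamma=0$ is excluded as $g$ is non-constant). Back-substituting yields
\[
Q_1(u)=d\,(-\log u)^{\gamma},\qquad f(\alpha)=\alpha^{-\gamma},
\]
with $d:=\psi(1)=Q_1(e^{-1})\neq 0$; writing the statement's constant $c$ for $-\gamma$ recovers $f(\alpha)=\alpha^{c}$ and $X_\alpha=f(\alpha)Q_1(U)=d\,\alpha^{c}\,(-\log U)^{-c}$.

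Finally I would identify the two regimes. Strict monotonicity of $Q_1$ pins down the sign of $d$: when $c>0$ one needs $d>0$, when $c<0$ one needs $d<0$. Rewriting the factor $(-\log U)^{-c}$ in terms of a standard Fr\'echet variable $G$ (for $c>0$, since $(-\log U)^{-1}$ is Fr\'echet) or a standard exponential variable $G=-\log U$ (for $c<0$) then reproduces exactly the two cases recorded in the statement, and the normalisations $c=d=1$ and $c=d=-1$ recover the Type-2 Gumbel family (with $f(\alpha)=\alpha$) and the opposite-exponential family (with $f(\alpha)=\alpha^{-1}$), consistently with the examples preceding the theorem. The main obstacle is the middle step: choosing the substitution that turns the $\alpha$-dependent power $u^{1/\alpha}$ into a genuine multiplicative Cauchy equation, and then justifying that monotonicity alone (the only regularity supplied by \ref{rap:d2}) suffices to force the power-function solution; the sign bookkeeping in the last paragraph is routine once the representation $Q_1(u)=d(-\log u)^{\gamma}$ is in hand.
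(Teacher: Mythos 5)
Your route is genuinely different from the paper's: the paper first proves that the noise and the weights $f(\alpha)$ must have constant sign and then reduces to the additive-noise case by taking logarithms and invoking Theorem~\ref{thm:add_noise}, whereas you try to solve a multiplicative Cauchy equation directly on the quantile side. The Cauchy-equation portion of your argument is correct and clean: the substitution $u=e^{-s}$, the normalization $g=\psi/\psi(1)$, and the fact that monotone solutions of $g(sa)=g(s)g(a)$ are power functions are all sound, and your final formula $X_\alpha=d\,\alpha^{c}(-\log U)^{-c}$ with $\operatorname{sign}(d)=\operatorname{sign}(c)$ is the correct characterization. (For $c<0$ it reads $d\alpha^{c}G^{-c}$ with $G$ exponential; the exponent $G^{c}$ in the theorem's own display is a slip in the statement, as the case $c=d=-1$, where $X_\alpha=-E/\alpha$, confirms, so your formula does not literally "reproduce" that display --- it corrects it.)

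However, there is a genuine gap at the step on which everything else rests. The hypothesis only gives equality in \emph{distribution}: $f(\alpha)\,Q_1(U)\sim Q_1(U^{1/\alpha})$. To upgrade this to the pointwise identity $Q_1(u^{1/\alpha})=f(\alpha)Q_1(u)$ you must already know that $u\mapsto f(\alpha)Q_1(u)$ is increasing, i.e.\ that $f(\alpha)>0$: only then are both sides quantile functions of the same continuous law and hence equal (a.e.). Your argument runs in the opposite direction --- you first "read" the hypothesis as an identity of increasing functions and then deduce $f(\alpha)>0$ from that identity --- which is circular. Ruling out $f(\alpha)\le 0$ is not a formality: $f(\alpha)=0$ is excluded by continuity (Lemma~\ref{lem:Falpha_cont}), but if $f(\alpha)<0$ the correct pointwise identity becomes $Q_1(u^{1/\alpha})=f(\alpha)Q_1(1-u)$, and this has genuine spurious solutions at the level of a single $\alpha$: for any symmetric noise, $Q_1(u)=-Q_1(1-u)$, the case $\alpha=1$ is satisfied with $f(1)=-1$. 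Excluding such sign patterns uniformly in $\alpha$ is precisely what the paper's proof spends most of its length on: setting $p_0=\min(P(G>0),P(G<0))$, it derives lower bounds of the form $P(X_{\alpha_1}>X_{\alpha_2})\ge p_0^2$ for suitably signed pairs and contradicts \eqref{eq:def_max-compatible} by letting the parameter tend to $0$ or $\infty$, concluding that the noise is almost surely of one sign and $f$ of constant sign; only then does the reduction proceed. Your proof needs this sign analysis (or an analytic substitute for it) inserted before the functional equation can be written down; once that is supplied, the remainder of your argument goes through.
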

As in Section~\ref{sec:Gumb1}, this result leads imediately to a characterization 
in the context of discrete choice models.
\begin{corollary}[Characterization of multiplicative discrete choice model]
The unique discrete choice models for which
${g} ( v,u) = v u$ are given by the consumer utility laws 
$U_{t\,j}^{(1)}= p_{t\,j} \epsilon_1$ or $U_{t\,j}^{(2)}= -1/(U_{t\,j}^{(1)})$,
where $\epsilon$ is a Type-2 Gumbel distributed random variable.
This means that, if the law of $\epsilon$ is not a Type-2 Gumbel distribution
or an exponential distribution, then there does not
exists a function $f = f(p)$ for which 
the utilities $U_{\alpha}= f(p_\alpha) \epsilon$ are generated with a $\max$-compatible family.
\end{corollary}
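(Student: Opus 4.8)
The plan is to read the corollary as a direct translation of Theorem~\ref{thm:char_dep} into the discrete choice dictionary of Example~\ref{exa:dcm}, exactly as the preceding corollary is read off the additive noise theorem. So the only real content is to set up the correspondence between ``reproducing preassigned choice probabilities'' and ``being a multiplicative $\max$-compatible family'', after which the two-parameter description $X_\alpha = d\alpha^c G^c$ of Theorem~\ref{thm:char_dep} can be collapsed to the two displayed models by argmax-preserving transformations.

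First I would fix the correspondence. With $\hat g(v,u)=vu$ the consumer utilities are $U_{t\,j}=s(t,x_j)\,\epsilon_{t\,j}$, and writing the observable part as $s(t,x_j)=f(p_{t\,j})$ the choice is $\hat x_t=\mathop{\arg\max}_j f(p_{t\,j})\,\epsilon_{t\,j}$. Since the $\epsilon_{t\,j}$ are i.i.d.\ copies of a common $\epsilon$, I set $X_\alpha := f(\alpha)\,\epsilon$ and observe that demanding $P(\hat x_t=x_j)=p_{t\,j}$ for every decision maker is, by Definition~\ref{defi:max-compatible} and \eqref{eq:def_max-compatible}, exactly the requirement that $\{X_\alpha,\alpha>0\}$ be $\max$-compatible: taking $\alpha_j=p_{t\,j}$ makes the selection probability $\alpha_j/\sum_k\alpha_k$ equal $p_{t\,j}$ because the $p_{t\,j}$ sum to one, and since the relation only involves ratios of the $\alpha$'s, ranging over all decision makers recovers the full family property. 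Continuity of $\epsilon$ is not an extra hypothesis: it is forced by Lemma~\ref{lem:Falpha_cont}, and Corollary~\ref{cor:Falpha_to_U} lets me write $\epsilon\sim Q_1(U)$, so that $X_\alpha=f(\alpha)Q_1(U)$ is precisely the multiplicative-noise form of Theorem~\ref{thm:char_dep}.

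It then remains to apply that theorem and unwind the two cases. Theorem~\ref{thm:char_dep} gives $X_\alpha=d\alpha^c G^c$ with $c,d\neq0$, with $d>0$ and $G\cong$ Fr\'echet when $c>0$, and $d<0$ and $G\cong$ Exponential when $c<0$; in particular $\epsilon=Q_1(U)$ is pinned down as $G^c$, so no admissible $f$ exists unless $\epsilon$ carries one of these two laws, which is the impossibility statement. To obtain the two canonical models I would use part~\ref{rap:a1} of Theorem~\ref{thm:main_repr}: applying an increasing $h$ to all utilities leaves the argmax, hence the choice probabilities, unchanged. For $c,d>0$ the increasing map $x\mapsto(x/d)^{1/c}$ on $(0,\infty)$ sends $X_\alpha$ to $\alpha G$ with $G\cong$ Fr\'echet, i.e.\ to $U^{(1)}_{t\,j}=p_{t\,j}\epsilon$ with $\epsilon$ Type-2 Gumbel and $f(p)=p$. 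For $c,d<0$ the analogous increasing map reduces $X_\alpha$ to the $c=d=-1$ representative $-1/(\alpha\epsilon)$, which is exactly $U^{(2)}_{t\,j}=-1/U^{(1)}_{t\,j}$ obtained from $U^{(1)}$ by the increasing map $x\mapsto-1/x$ on $(0,\infty)$; since $1/\epsilon$ is Exponential when $\epsilon$ is Fr\'echet, this is the exponential-noise case. This exhausts the possibilities and yields uniqueness.

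The step I expect to be delicate is the bookkeeping in this last reduction: one must check that $x\mapsto(x/d)^{1/c}$ and $x\mapsto-1/x$ are genuinely monotone \emph{increasing} on the relevant half-line (so that part~\ref{rap:a1} applies and the selection is preserved rather than reversed), and that the noise identifications --- Fr\'echet $=$ Type-2 Gumbel for $U^{(1)}$, and reciprocal-of-Fr\'echet $=$ Exponential for $U^{(2)}$ --- are matched consistently with the sign conventions of Theorem~\ref{thm:char_dep}. The translation in the first step is conceptually the heart of the argument but becomes routine once the equivalence with Definition~\ref{defi:max-compatible} is made explicit.
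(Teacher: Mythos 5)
Your proposal is correct and takes essentially the same route as the paper: the paper gives no separate proof of this corollary, treating it as an immediate restatement of Theorem~\ref{thm:char_dep} in the discrete-choice dictionary, which is precisely your argument (identify $U_\alpha=f(p_\alpha)\epsilon$ with a multiplicative $\max$-compatible family via Definition~\ref{defi:max-compatible} and Lemma~\ref{lem:Falpha_cont}, then read off the two cases). The bookkeeping you add --- the argmax-preserving increasing maps $x\mapsto(x/d)^{1/c}$ and $x\mapsto-1/x$, and the reciprocal identification between the Fr\'echet (Type-2 Gumbel) and Exponential noises --- is exactly what the paper leaves implicit, and you handle it consistently with the sign conventions of that theorem.
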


\subsection{Vademecum for model selection}

With the notation of Example~\ref{exa:dB},
when one records data with \texttt{Strength} 
that are proportional to the probability of their 
\texttt{QUAL} and bounded away from $0$,
the Negative Exponential distribution may be a good and simple choice. 
It should be preferred to the
Gumbel family, Type 2, for stability and precision in the simulation 
of the random variable, and hence ${\texttt{RND2}} =  - Y / \texttt{Strength}  $.
In \cite{Marsaglia00} it is discussed the ziggurat 
algorithm in simulating a Negative Exponential distribution, even if the direct method 
$Y = -\log(\texttt{RAND()})$ is usually preferable.

When one deals with self-information or surprisal, or
with a classifier that produces a score in logarithm space (as the multinomial
Bayes classifier in Example~\ref{exa:dB}), it is not convenient to exponentiate it, due to possible precision errors.
It is much more convenient to work with an additive model and Gumbel distributions of Type 1, see above.
We recall that the density of such a distribution is $f(t) = \exp(-(t+\exp(-t)))$, that means it has a
log-concave density, as the Negative Exponential distribution. 
Therefore, the random Gumbel variable can be generated either starting from a uniform distribution $U$ with
$Y = -\log(-\log(U))$ or with an appropriate direct method, as in \cite{Devroye12},
where a black-box style rejection method is proposed. 
Again, $\texttt{RND2} = \texttt{Strength} - \log(-\log(\texttt{RAND()}) )$ is a good choice in this case.

\section{Conclusions}\label{sec:concl}
In this paper, we propose a new class of parallelizable algorithms to simulate discrete random variables with general distributions.
The key idea is to increment simple operations that may be performed on each single possible outcome (local fully parallelizable operation), leaving to a single associative operation the final simulation of the random variable.

A probabilistic approach to this paradigm suggests future research. In fact, this algorithm selects the last index in the order statistics of the sample $(X_{\alpha_1},\ldots,X_{\alpha_n})$ generated with a $\max$-compatible family. But, while the theory of order statistics is highly developed (see, \cite{ORD03,adv06}), the theory of the ordered indexes of the order statistics merits to be exploited.

Besides this, the associative method that we have described in the previous sections suggests some research for a data structure which is optimal for the problem of the distributions that vary with time in scalable situations.

\appendix

We start by recalling and extending the notation given above.
$F,F_X,F_Y, F_{\alpha},\ldots$ denote cumulative 
distributions on $\mathbb R$,
while $X,X_{\alpha}, Y, Y_n, \ldots$ denote random variables on $\mathbb R$.
$X\sim Y$ means that $X$ and $Y$ share the same distribution, while 
$X\cong F$  means that the random variable $X$ has cumulative function $F$, also denoted
 by $F_X$. 
Thus, if $X\sim Y\cong F$, then $F(t_0) = P(X\leq t_0)=P(Y\leq t_0)$ 
and $F(t_0^{-}) = \lim_{s\uparrow t_0} F(s) = P(X< t_0)$ for any $t_0\in\mathbb{R}$. 
In addition, since $P(X = t_0) = F(t_0) - F(t_0^{-}) $ for any $t_0$, 
the continuity of $F$ at $t_0$ is equivalent to say that
$X$ does not have an atom at $t_0$. 
$U$ denotes always the random variable with uniform distribution on $(0,1)$:
$F_U(t) = \max(\min(t,1),0) $.
We will denote by $Q:(0,1) \to \mathbb{R}$ the quantile function associate to
a cumulative function $F$ in the following way:
\[
Q(u) = \inf\{ x \in\mathbb{R}\colon F(x)>u \} = \sup\{ y \in\mathbb{R}\colon F(y)\leq u \} .
\]
It is well known that, if $U$ is a $(0,1)$-uniform distributed random variable, then 
$Q(U) \cong F$. In addition, if $F$ is a continuous function, then $F(Q(u)) = u$ for any $u\in(0,1)$,
and, moreover, $Q(F(t)) = t$ $F$-almost everywhere.

\section{Basic results from probability theory}
The first lemma is a simple exercise of probability theory. We give here the proof for the sake of completeness.

\begin{lemma}\label{lem:XYatoms}
Let $X,Y$ be independent random variables
with common cumulative function $F$. Then $F$ is continuous if and only if $P(X=Y)=0$.
\end{lemma}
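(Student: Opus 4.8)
The plan is to prove the two implications separately, expressing the event $\{X=Y\}$ in terms of the common law of $X$ and $Y$ and the atoms of $F$. First I would set up notation: let $D$ denote the (at most countable) set of discontinuity points of $F$, i.e.\ the atoms of the common distribution, so that $P(X = t) = F(t) - F(t^-)$ for every $t$, and $t \in D$ exactly when this jump is strictly positive.

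For the implication ``$F$ continuous $\Rightarrow P(X=Y)=0$'', the key step is to condition on $X$ and use independence. Since $X$ and $Y$ are independent with the same law, for each fixed value $t$ the probability that $Y$ equals $t$ is the jump $F(t)-F(t^-)$, which vanishes when $F$ is continuous. Formally I would write
\[
P(X=Y) = \int_{\mathbb{R}} P(Y = t)\, dF(t) = \int_{\mathbb{R}} \bigl(F(t)-F(t^-)\bigr)\, dF(t),
\]
using independence to factor the joint law and then integrating against the law of $X$. When $F$ is continuous the integrand is identically zero, so $P(X=Y)=0$.

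For the converse ``$P(X=Y)=0 \Rightarrow F$ continuous'', I would argue by contrapositive: if $F$ is \emph{not} continuous, it has at least one atom $t_0 \in D$ with $p := F(t_0)-F(t_0^-) > 0$. By independence, $P(X = t_0,\, Y = t_0) = P(X=t_0)P(Y=t_0) = p^2 > 0$, and since $\{X=t_0,Y=t_0\} \subseteq \{X=Y\}$ this forces $P(X=Y) \geq p^2 > 0$. Equivalently, one can read this off the same formula above: $P(X=Y) = \sum_{t \in D} \bigl(F(t)-F(t^-)\bigr)^2$, which is strictly positive as soon as $D \neq \emptyset$.

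The only genuine technical point — and the step I would be most careful with — is justifying the integral representation $P(X=Y) = \int P(Y=t)\,dF(t)$ rigorously, since this is where independence and Fubini enter. The cleanest route is to observe that the diagonal $\{(x,y): x=y\}$ is a measurable set and apply Fubini/Tonelli to the product measure $\mu \times \mu$ (where $\mu$ is the common law), giving $P(X=Y) = (\mu\times\mu)(\{x=y\}) = \int \mu(\{t\})\,d\mu(t)$; the inner integrand $\mu(\{t\})$ is nonzero only on the countable atom set $D$, so the integral collapses to the sum $\sum_{t\in D} \mu(\{t\})^2$. From this single identity both implications are immediate, which makes it the natural backbone of the proof.
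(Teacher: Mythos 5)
Your proposal is correct and follows essentially the same route as the paper: the continuous direction via Fubini's theorem applied to the diagonal (integrating $P(Y=t)$ against the law of $X$), and the converse by contrapositive, bounding $P(X=Y)$ below by the squared mass of an atom. The unified identity $P(X=Y)=\sum_{t\in D}\bigl(F(t)-F(t^-)\bigr)^2$ is a nice consolidation, but it is the same underlying argument.
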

\begin{proof}
Assume that $X$ has an atom at $t_0$. Then 
\[
P(X=Y) \geq P(X=Y=t_0) = P(X=t_0) P(Y=t_0) = (P(X=t_0) )^2>0 .
\]
Conversely, if $F_{X} (\{y\})=0 $ for any $y$, by Fubini's Theorem,
\[
P(X=Y) = 
\int_{\mathbb R} \Bigg( \int_{\{y\}} dF_{X} (x) \Bigg) dF_{Y} (y) = 
\int_{\mathbb R} 0\, dF_{Y} (y)  =0. \qedhere
\]
\end{proof}

Given a cumulative function $F$, it is well known that the quantile function $Q:(0,1) \to \mathbb{R}$
induces the pushforward measure 
with cumulative distribution function $F$. 
When $F$ is continuous, $F(Q(u))=u$, and hence for any couple of measurable functions 
$G:\mathbb{R}\to [0,1],
h: [0,1]\times [0,1]\to [0,1]$, the change-of-variables formula (see \cite[Section 3.6]{Bog07}) gives
\begin{equation}\label{eq:ch_formula}
\int_{\mathbb{R}} h(G(t),F(t)) \, dF(t) 
=\int_{0}^1 h(G(Q(u)),u) \, du .
\end{equation}

We have the following result.
\begin{theorem}\label{thm:inversion}
Let $F,G$ be two continuous cumulative functions.
If, for any $n\geq 0$,
\[
\int_{\mathbb{R}} G(t) \, (F(t) )^n \, dF(t) = \frac{1}{n+2},
\]
then $G(t)= F(t)$.
\end{theorem}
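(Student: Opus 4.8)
The plan is to reduce the integral identity to a moment problem on $[0,1]$ via the change-of-variables formula \eqref{eq:ch_formula}, and then to propagate the resulting pointwise identity from the range of the quantile function to all of $\mathbb{R}$.

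First I would apply \eqref{eq:ch_formula} with $h(x,y) = x\,y^n$, which maps $[0,1]\times[0,1]$ into $[0,1]$ since $G$ is a cumulative function and hence $[0,1]$-valued. This rewrites the hypothesis as
\[
\int_0^1 G(Q(u))\, u^n \, du = \frac{1}{n+2}, \qquad n \geq 0 ,
\]
where $Q$ is the quantile function of $F$. Writing $\psi(u) := G(Q(u))$ and observing that $\tfrac{1}{n+2} = \int_0^1 u\cdot u^n\, du$, the whole family of identities collapses to $\int_0^1 (\psi(u) - u)\, u^n \, du = 0$ for every $n \geq 0$. Now comes the moment step: $\psi(u) - u$ is bounded (both terms lie in $[0,1]$), hence lies in $L^2(0,1)$, and its integral against every monomial vanishes. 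Since the monomials are total in $L^2(0,1)$ (Weierstrass approximation / completeness of polynomials), this forces $\psi(u) = u$ for Lebesgue-almost every $u \in (0,1)$, i.e.\ $G(Q(u)) = u$ a.e. Recalling that, $F$ being continuous, $F(Q(u)) = u$ for \emph{every} $u \in (0,1)$, we obtain $G(Q(u)) = F(Q(u))$ for a.e.\ $u$.

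Finally I would upgrade this to equality everywhere. Fix $t_0$ with $F(t_0) \in (0,1)$. For every $u < F(t_0)$ one has $Q(u) \le t_0$ (because $t_0 \in \{x : F(x) > u\}$), while for every $u > F(t_0)$ one has $Q(u) \ge t_0$ (any $x$ with $F(x) > u > F(t_0)$ must satisfy $x > t_0$). Choosing $u$ in the full-measure set where $G(Q(u)) = u$ and using that $G$ is non-decreasing, the first inequality gives $u = G(Q(u)) \le G(t_0)$, and letting $u \uparrow F(t_0)$ yields $F(t_0) \le G(t_0)$; the second gives $u = G(Q(u)) \ge G(t_0)$, and letting $u \downarrow F(t_0)$ yields $F(t_0) \ge G(t_0)$. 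Hence $G(t_0) = F(t_0)$, and the boundary cases $F(t_0) \in \{0,1\}$ follow from the same one-sided estimate combined with the continuity of $G$.

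The step I expect to be the main obstacle is this last one. The moment condition carries information only against $dF$, so it says nothing directly on the intervals where $F$ is flat, precisely where $dF = 0$ and the range of $Q$ leaves a gap. The squeeze through the quantile function, leaning on the monotonicity and continuity of both $F$ and $G$, is exactly the device that transfers the almost-everywhere identity on the range of $Q$ into genuine pointwise equality $G = F$ across such flat pieces.
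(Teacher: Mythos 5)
Your proof is correct, and its core coincides with the paper's: both reduce the hypothesis via the change-of-variables formula \eqref{eq:ch_formula} to the moment identities $\int_0^1 G(Q(u))\,u^n\,du = \tfrac{1}{n+2}$ for all $n\ge 0$, and then invoke completeness of polynomials in $L^2(0,1)$ to conclude $G(Q(u))=u$ for almost every $u$ (you phrase this as totality of the monomials; the paper phrases it via best $L^2$-polynomial approximation, whose coefficients are pinned down by exactly the same moments --- the two are equivalent). Where you genuinely diverge is the endgame. The paper inverts on the other side: from $Q(F(x))=x$ for $F$-almost every $x$ it writes $G(x)=k(F(x))$ with $k(u)=G(Q(u))$, so that $k(u)=u$ a.e.\ yields $G=F$ --- but a priori only $F$-almost everywhere, and the promotion of this to equality at \emph{every} point (in particular across intervals where $F$ is flat, which carry no $dF$-mass) is left implicit. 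Your squeeze through the quantile function supplies precisely this missing promotion: the inequalities $Q(u)\le t_0$ for $u<F(t_0)$ and $Q(u)\ge t_0$ for $u>F(t_0)$, combined with the monotonicity of $G$ and limits along the full-measure set, give $G(t_0)=F(t_0)$ pointwise, flat pieces and boundary cases included. So your route buys a self-contained, airtight final step where the paper tacitly leans on continuity and monotonicity; the paper's formulation is slightly quicker if one is willing to absorb that routine verification. You also correctly identified this upgrade as the real obstacle of the problem.
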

\begin{proof}
Let $Q$ be the quantile function of $F$; we denote by $k:[0,1]\to[0,1]$ the measurable function
define by $k(u) = G(Q(u))$. 
Since $F$ is continuous, $Q(F(x)) = x$ for $F$-almost any $x$, and hence $G(x) = k(F(x))$. The thesis is
then proved once we show that $k(u) = u$ almost everywhere.

Now, it is well known that the coefficients $a^{(N)}_0,a^{(N)}_1, \ldots, a^{(N)}_N$ 
of the best $L^2$-polynomial approximation 
$P^{(N)}(u) = \sum_{j=0}^N a^{(N)}_j u^j$ on $(0,1)$ of the bounded measurable function $k(u)$ 
may be obtained by solving the following system:
\begin{equation}\label{eq:approx}
\sum_{j=0}^N a^{(N)}_j \int_{0}^1 u^{n+j} du = \int_{0}^1 k(u) u^n du, \qquad n=0, \ldots, N.
\end{equation}
As direct consequence of the approximation, $P^{(N)}(u) \to k(u)$ in $L^2(0,1)$. 

By \eqref{eq:ch_formula}, for any $n\geq 0$,
\[
\int_{0}^1 k(u) u^n du 
= \int_{0}^1 G(Q(u)) u^n du 
= \int_{\mathbb{R}} G(t) \, (F(t) )^n \, dF(t) = 
\frac{1}{n+2}.
\]
Since \(  \frac{1}{n+2} = 
\int_{0}^1 u^{n+1} du ,
\)
then the solution of \eqref{eq:approx} is 
\(a^{(N)}_j = \mathbbm{1}_{1}(j)\) or, equivalently, $ P^{(N)}(u) = u$. Then 
\[
k(u) = {\lim}_N P^{(N)}(u)  = {\lim}_N u = u. \qedhere
\]
\end{proof}

\section{Proof of the main results}
We now give a first property of any $\max$-compatible family that is
required in Remark~\ref{rem:cont}.
\begin{lemma}\label{lem:Falpha_cont}
Let  $\mathcal C= \{F_\alpha,\alpha > 0\}$ be a $\max$-compatible family. Then 
all the cumulative distribution functions $F_{\alpha}$ are continuous.
\end{lemma}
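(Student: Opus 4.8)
The plan is to reduce everything to the simplest instance of $\max$-compatibility, namely the case $n=2$ with equal parameters, and then invoke Lemma~\ref{lem:XYatoms} to transfer an ``atom-free'' probabilistic statement into continuity of the cumulative function. Fix an arbitrary $\alpha>0$; since $\alpha$ is generic, it suffices to prove that $F_\alpha$ is continuous.

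First I would take two independent copies $X_{\alpha_1},X_{\alpha_2}$, both distributed according to $F_\alpha$ (that is, $\alpha_1=\alpha_2=\alpha$). Applying Definition~\ref{defi:max-compatible} with $n=2$ gives
\[
P(X_{\alpha_1}>X_{\alpha_2}) = \frac{\alpha}{\alpha+\alpha} = \frac{1}{2}.
\]
Since $X_{\alpha_1}$ and $X_{\alpha_2}$ are independent and identically distributed, the pair $(X_{\alpha_1},X_{\alpha_2})$ has the same law as $(X_{\alpha_2},X_{\alpha_1})$, so by symmetry $P(X_{\alpha_2}>X_{\alpha_1})=\tfrac{1}{2}$ as well.

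Next I would combine these two equalities with the decomposition of the sample space into the three disjoint events $\{X_{\alpha_1}>X_{\alpha_2}\}$, $\{X_{\alpha_2}>X_{\alpha_1}\}$, and $\{X_{\alpha_1}=X_{\alpha_2}\}$, whose probabilities sum to $1$. This forces
\[
P(X_{\alpha_1}=X_{\alpha_2}) = 1 - \tfrac{1}{2} - \tfrac{1}{2} = 0.
\]
Finally, since $X_{\alpha_1}$ and $X_{\alpha_2}$ are independent with common cumulative function $F_\alpha$ and the tie probability vanishes, Lemma~\ref{lem:XYatoms} immediately yields that $F_\alpha$ is continuous. As $\alpha>0$ was arbitrary, every member of the family $\mathcal C$ is continuous.

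There is essentially no hard part: the whole argument rests on observing that the $n=2$, equal-parameter instance of \eqref{eq:def_max-compatible} already pins down $P(X_{\alpha_1}=X_{\alpha_2})=0$, after which Lemma~\ref{lem:XYatoms} does the remaining work. The only step requiring a moment of care is the symmetry argument, which is valid precisely because the two copies share the single distribution $F_\alpha$; with distinct parameters $\alpha_1\neq\alpha_2$ the symmetry would break down and the tie probability could not be isolated from \eqref{eq:def_max-compatible} alone, so choosing \emph{equal} parameters is the essential idea.
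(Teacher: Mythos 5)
Your proof is correct and follows essentially the same route as the paper: the paper likewise takes two independent copies with common law $F_\alpha$, reads off $P(X>Y)=\tfrac12=P(Y>X)$ from \eqref{eq:def_max-compatible} with $n=2$ and equal parameters, concludes $P(X=Y)=0$, and invokes Lemma~\ref{lem:XYatoms}. The only cosmetic difference is that you obtain the second equality by an exchangeability argument while the paper applies \eqref{eq:def_max-compatible} a second time with the roles swapped, which is the same thing.
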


\begin{proof}
Let $X,Y$ be independent random variables
with common distribution function $F_{\alpha}$. Since, by \eqref{eq:def_max-compatible}, 
$P(X>Y)=1/2=P(Y>X)$, then $P(X=Y)=0$. The thesis follows by Lemma~\ref{lem:XYatoms}. 
\end{proof}
It is well known that if $X\cong F_X$ is a random variable with continuous distribution function,
then the random variable $Y=F_{X}(X)$ has a uniform distribution on $(0, 1)$.
The following corollary is an immediate consequence of Lemma~\ref{lem:Falpha_cont}. 
\begin{corollary}\label{cor:Falpha_to_U}
Let  $\mathcal C= \{F_\alpha,\alpha > 0\}$ be a $\max$-compatible family. 
If $X\cong F_{\alpha}$, then $F_{\alpha}(X) \sim U$. 
\end{corollary}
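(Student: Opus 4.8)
The plan is to reduce the claim to the classical probability integral transform, using continuity as the only nontrivial input. Since $\mathcal C$ is $\max$-compatible by hypothesis, Lemma~\ref{lem:Falpha_cont} guarantees that each $F_\alpha$ is a continuous cumulative distribution function. That is exactly the hypothesis under which the fact recalled just above the statement applies, namely that a continuous CDF evaluated at a random variable carrying that same law yields a uniform variable on $(0,1)$. So the corollary really is ``immediate'' once Lemma~\ref{lem:Falpha_cont} is invoked; the remaining work is only to record the standard transform argument.

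Concretely, I would fix $\alpha>0$ and $X\cong F_\alpha$ and aim to show $P(F_\alpha(X)\leq u)=u$ for every $u\in(0,1)$, which identifies the law of $F_\alpha(X)$ with that of $U$. Letting $Q$ be the quantile function of $F_\alpha$ and setting $x_u=Q(u)$, continuity of $F_\alpha$ gives $F_\alpha(x_u)=u$ (as noted in the appendix), while the defining property $x_u=\sup\{y:F_\alpha(y)\leq u\}$ gives $F_\alpha(y)>u$ for every $y>x_u$. These two facts yield the event identity $\{F_\alpha(X)\leq u\}=\{X\leq x_u\}$: monotonicity of $F_\alpha$ gives $\{X\leq x_u\}\subseteq\{F_\alpha(X)\leq u\}$, and any outcome with $X>x_u$ forces $F_\alpha(X)>u$, yielding the reverse inclusion. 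Taking probabilities then gives $P(F_\alpha(X)\leq u)=P(X\leq x_u)=F_\alpha(x_u)=u$, so $F_\alpha(X)$ has the uniform CDF on $(0,1)$ and hence $F_\alpha(X)\sim U$.

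I do not expect any genuine obstacle here. The only point requiring care is that $F_\alpha$ need not be strictly increasing, so one cannot literally invert it; this is precisely why I would phrase the event identity through $x_u=Q(u)$ rather than through an inverse of $F_\alpha$. Continuity, delivered by Lemma~\ref{lem:Falpha_cont}, is what rules out atoms and guarantees the exact equality $F_\alpha(x_u)=u$, which closes the argument.
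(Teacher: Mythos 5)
Your proposal is correct and follows exactly the paper's route: the paper likewise treats the corollary as an immediate consequence of Lemma~\ref{lem:Falpha_cont} combined with the well-known probability integral transform for continuous distribution functions. The only difference is that you spell out the standard transform argument via the quantile function, which the paper leaves as a known fact.
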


We prove the main result of the paper.
\begin{proof}[Proof of Theorem~\ref{thm:main_repr}]
Let $h$ be a monotone increasing function. Since $x<y \iff h(x) < h(y)$, then 
\begin{align*}
\{X_{\alpha_1} >\max(X_{\alpha_2},\ldots,X_{\alpha_n})\}
& =
\{h(X_{\alpha_1}) >h(\max(X_{\alpha_2},\ldots,X_{\alpha_n}))\}
\\& 
=
\{h(X_{\alpha_1}) >\max(h(X_{\alpha_2}),\ldots,h(X_{\alpha_n}))\}
\end{align*}
and hence $\ref{rap:a} \Longrightarrow \ref{rap:a1}$. The converse is trivial, since $h(t)=t$ is
a monotone increasing function.

Since $F_{-X}(t) = 1 - F_{X}(-t^{-})$, $F_X$ is continuous, and
$-\max(x,y)= \min(-x,-y)$, then $\ref{rap:a} \iff \ref{rap:b}$.

To prove that $\ref{rap:c} \Longrightarrow \ref{rap:d}$ we prove that, for ant fixed $t$,
$F_{\alpha}(t) = T^\alpha$, with $T = F_{1}(t)$. 
Accordingly, let $t$ be fixed, and define $f(\alpha) = F_\alpha (t)$.
By \ref{rap:c}, for any $\alpha_1,\alpha_2>0$, we have that
\begin{multline*}
f(\alpha_1)f(\alpha_2) = F_{\alpha_1}(t)
F_{\alpha_2}(t) = F_{X_1}(t)
F_{X_2}(t) 
= F_{\max(X_{\alpha_1} , X_{\alpha_2})} (t)  = 
F_{\alpha_1+\alpha_2}(t) = f(\alpha_1+\alpha_2) ,
\end{multline*}
where the fourth equality follows specifically from \eqref{eq:def_compat}. The thesis $\ref{rap:c} \Longrightarrow \ref{rap:d}$
is hence a consequence of the fact
that $f(\alpha) = T^\alpha$ (with $T\geq 0$) is the solution to the functional equation 
$f(\alpha_1+\alpha_2) = f(\alpha_1)f(\alpha_2) $ with $\alpha_1,\alpha_2>0, f(1) = T$. 

The opposite implication $\ref{rap:c} \Longrightarrow \ref{rap:d}$ also holds true. In fact, if
$X_{\alpha_1} \cong F_{\alpha_1}$ and 
$X_{\alpha_2} \cong F_{\alpha_2}$ 
are independent random variables, then $F_{\alpha_1} (t)F_{\alpha_2} (t) = F_{\max(X_1,X_2)} (t)$.
By \ref{rap:d}, we immediately obtain
$$
F_{\alpha_1+\alpha_2} (t) = (F_1(t))^{\alpha_1+\alpha_2} = (F_1(t))^{\alpha_1} (F_1(t))^{\alpha_2} =  
F_{\alpha_1} (t)F_{\alpha_2} (t) = F_{\max(X_1,X_2)} (t),
$$ 
and hence $\ref{rap:c} \Longrightarrow \ref{rap:d}$.

We prove $\ref{rap:d} \Longrightarrow \ref{rap:a}$ by first noticing that, 
if $X_{\alpha_1} \cong F_{\alpha_1}$ and 
$X_{\alpha_2} \cong F_{\alpha_2}$ 
are independent random variables, then 
\begin{align*}
P( X_{\alpha_1} \leq X_{\alpha_2}) = &
\int_{\mathbb{R}} P( X_{\alpha_1} \leq t | X_{\alpha_2} = t)\, dF_{\alpha_2}(t) 
\\
& = \int_{\mathbb{R}} F_{\alpha_1}(t)\, dF_{\alpha_2}(t)
= \int_{\mathbb{R}} (F_1(t))^{\alpha_1} \, d (F_1(t)^{\alpha_2})
\\
& = \alpha_2 \int_{\mathbb{R}} (F_1(t))^{\alpha_1+\alpha_2-1} \, d (F_1(t))
\intertext{and, by \eqref{eq:ch_formula}, the continuity of $F_1$ implies that}
& = \alpha_2 \int_{0}^1 u^{\alpha_1+\alpha_2-1} \, d u
= \frac{\alpha_2}{\alpha_1+\alpha_2} .
\end{align*}
As a consequence, 
\begin{equation}\label{eq:dim1}
  P( X_{\alpha_1} > X_{\alpha_2}) = 1 - P( X_{\alpha_1} \leq X_{\alpha_2}) = 
  1 -\frac{\alpha_2}{\alpha_1+\alpha_2}=
\frac{\alpha_1}{\alpha_1+\alpha_2} .
\end{equation}
Now, let 
$X_{\alpha_i} \cong F_{\alpha_i}, i=1,\ldots,n$ 
be independent random variables. 
Let $\{Y_m \cong F_{\sum_{j=2}^m\alpha_j}, m = 3, \ldots, n \}$ 
be a family of independent 
random variables and independent of $\sigma(X_{\alpha_1},X_{\alpha_2},\ldots,X_{\alpha_{n}})$.
Since $\ref{rap:d} \Longrightarrow \ref{rap:c}$, we use the associative structure of $\max$ 
and \eqref{eq:def_compat} to obtain
$Y_n \sim \max(X_{\alpha_2},\ldots,X_{\alpha_{n}}) $ in the following way:
\begin{align*}
\max(X_{\alpha_2},\ldots,X_{\alpha_{n}}) & = \max(\max(X_{\alpha_2}, X_{\alpha_3}),\ldots,X_{\alpha_{n}})
\\ &
\sim \max(Y_{3},X_{\alpha_{4}},\ldots,X_{\alpha_{n}}) 
\\ &
\qquad = \max(\max(Y_{3},X_{\alpha_{4}}),\ldots,X_{\alpha_{n}}) 
\\ &
\sim \max(Y_{{4}},\ldots,X_{\alpha_{n}}) 
\\ &
\qquad = \quad \cdots
\\ & \sim Y_n .
\end{align*}
Then, by  \eqref{eq:dim1}, since $Y_n \sim X_{\sum_{i=2}^n\alpha_i} $ is independent of $X_{\alpha_1}$,
\[
P\big( X_{\alpha_1} >\max(X_{\alpha_2},\ldots,X_{\alpha_n}) \big) = 
P\big( X_{\alpha_1} > Y_n \big) = 
\frac{\alpha_1}{\alpha_1+\sum_{i=2}^n\alpha_i} 
=\frac{\alpha_1}{\sum_{i=1}^n\alpha_i} ,
\]
which is the thesis: $\ref{rap:d} \Longrightarrow \ref{rap:a}$.

Now assume \ref{rap:a}, let $\alpha_1,\alpha_2$ be fixed and let $X_{\alpha_1} \cong F_{\alpha_1}$,
$X_{\alpha_2} \cong F_{\alpha_2}$,
$X_{\alpha_1+\alpha_2} \cong F_{\alpha_1+\alpha_2}$ 
be independent random variables. 
We 
denote by $G$ the cumulative function of 
\(
\max(X_{\alpha_1} , X_{\alpha_2} ) 
\).
For $n\geq 0$, let $Y_1,\ldots, Y_n$ be independent random variables distributed as
$F_{\alpha_1+\alpha_2}$ and independent of $\sigma( X_{\alpha_1+\alpha_2} ,X_{\alpha_1} , X_{\alpha_2} )$. 
By \eqref{eq:def_max-compatible}, we have
\begin{multline*}
\frac{1}{n+2} =
\frac{(\alpha_1+\alpha_2)}{(\alpha_1+\alpha_2) + \alpha_1+\alpha_2+ n(\alpha_1+\alpha_2)} 
=P\big(X_{\alpha_1+\alpha_2} > \max(X_{\alpha_1} , X_{\alpha_2},Y_1,\ldots, Y_n ) \big) .
\end{multline*}
By Remark~\ref{rem:cont}, we change $>$ with $\geq$, obtaining
\begin{align*}
\frac{1}{n+2} & =
P\big(X_{\alpha_1+\alpha_2} \geq \max(X_{\alpha_1} , X_{\alpha_2},Y_1,\ldots, Y_n ) \big) 
\\ & 
= P(\max(X_{\alpha_1} , X_{\alpha_2},Y_1,\ldots, Y_n ) \leq X_{\alpha_1+\alpha_2}  ) 
\\ & 
= 
\int_{\mathbb{R}} P\big( \max(X_{\alpha_1} , X_{\alpha_2},Y_1,\ldots, Y_n )  
\leq t \big| X_{\alpha_1+\alpha_2} = t\big)\, dF_{\alpha_1+\alpha_2}(t) 
\\ & 
= 
\int_{\mathbb{R}} P\big( \max( \max(X_{\alpha_1} , X_{\alpha_2}) , \max(Y_1,\ldots, Y_n ) )  \leq t 
\big)\, dF_{\alpha_1+\alpha_2}(t) 
\\ & 
= 
\int_{\mathbb{R}} G(t) \, (F_{\alpha_1+\alpha_2}(t) )^n \, dF_{\alpha_1+\alpha_2}(t)  .
\end{align*}
By Theorem~\ref{thm:inversion}, $G(t)= F_{\alpha_1+\alpha_2}(t)$. 
Since, by definition, $G(t)= F_{\max(X_{\alpha_1},X_{\alpha_2})}(t)$, then
$\ref{rap:a} \Longrightarrow \ref{rap:c}$.

To prove $\ref{rap:d} \iff \ref{rap:d2}$ it is sufficient to note that continuous distributions
are \emph{characterized} by strictly increasing quantile functions. Then, if we 
denote by $Q_1(u)$ the quantile function related to $F_1(t)$, we obtain
\[
P(Q_{\alpha}(U) \leq t) = P( Q_1(\sqrt[\alpha]{U}) \leq t) = 
P( \sqrt[\alpha]{U} \leq F_1(t) ) = 
(F_1(t))^{\alpha},
\]
that is the thesis.
\end{proof}

\begin{proof}[Proof of Theorem~\ref{thm:add_noise}]
Up to linear rescaling, we may assume that $f(1)=0$ and 
$Q_1(\tfrac{1}{e})=0$, that will simplify our computations
in the sequel. We recall that $F_\alpha$ is continuous, thus $F_\alpha(X_\alpha) \cong t\mathbbm{1}_{(0,1)}(t)$.
In addition, since $F_1$ and $Q_1$ are monotone functions, then
$F_1(f(\alpha) + Q_1(U)) ^\alpha = U$.
%
%
Now, since $F_1(f(\alpha) + Q_1(U)) = \exp(\frac{\log U}{\alpha})$, then 
by setting $U = \frac{1}{e}$, we obtain
\(
f(\alpha) = Q_1( \exp(-\tfrac{1}{\alpha}) )
\). Again, by substituting $ v = \exp(-\tfrac{1}{\alpha}) \in (0,1)$, 
\[
 F_1( Q_1(v) + Q_1(u)) = u ^{-\log v} = \exp( - \log u \log v), 
 \]
 which is equivalent to say that
\[ 
Q_1(v) + Q_1(u) = Q_1(\exp( - \log u \log v) ).
\]
Let $v = e^{-s}$, $u = e^{-t}$ ($s,t>0$), we get
\[ 
Q_1(e^{-s}) + Q_1(e^{-t}) = Q_1 (e^{ - st }),
\]
and hence, if $g(x) = Q_1(e^{-x})$, we obtain
\[ 
g(s) + g({t}) = g ({ st }), \qquad g(1) = 0,
\]
whose monotone continuous solutions are $g(x) = \pm \log (x)$. 
Since $Q_1$ is an increasing function, then $Q_1(u) = -\log (-\log u)$,
as expected.
\end{proof}

\begin{proof}[Proof of Theorem~\ref{thm:char_dep}]
When $X_\alpha \sim f(\alpha)G$, if we prove that all the $X_\alpha$'s must 
be either positive or negative with probability one, then the thesis will follow by 
applying Theorem~\ref{thm:add_noise} to $\log(X_\alpha)$ or 
$\log( - 1 / X_\alpha)$, respectively.

Since $G$ is continuous, let us denote by $p_+=P(G>0)$, $p_- = P(G<0)= 1-p_+$
and $p_0=\min(p_+,p_-)$. Let us assume by contradiction that $p_0>0$.

Now, we divide the indexes $\alpha$ according to the sign of $f$:
$\mathcal{F}_+ = \{\alpha>0\colon f(\alpha)>0\}$, 
$\mathcal{F}_- = \{\alpha>0\colon f(\alpha)<0\}$. 
We cannot have that $f(\alpha)=0$, since the generated distribution is not continuous,
contradicting  Lemma~\ref{lem:Falpha_cont}. Assume that both the sets 
$\mathcal{F}_+$ and $\mathcal{F}_-$ are not empty, then for each
$\alpha_+ \in \mathcal{F}_+$ and $\alpha_- \in \mathcal{F}_-$ we have
\begin{align*}
P( X_{\alpha_-} > X_{\alpha_+} ) & \geq 
P( X_{\alpha_-} > 0 ) P( X_{\alpha_+} < 0 ) \geq  
P( G < 0 )^2 \geq p_0^2;
\\
P( X_{\alpha_+} > X_{\alpha_-} ) & \geq 
P( X_{\alpha_+} > 0 ) P( X_{\alpha_-} < 0 ) \geq  
P( G > 0 )^2 \geq p_0^2;
\end{align*}
which is a contradiction with respect to \eqref{eq:def_max-compatible}, 
since at least one of the two sets $\mathcal{F}_+$ and $\mathcal{F}_-$ 
must be dense in a neighborhood of
$0$. Then, without loss of generality, we may assume that
$f(\alpha)>0$, for any $\alpha>0$. We have
\begin{align*}
P( X_{1} > X_{\alpha} ) & \geq 
P( X_1 > 0 ) P( X_{\alpha} < 0 ) \geq  
p_0^2;
\end{align*}
which is again a contradiction to \eqref{eq:def_max-compatible}
when $\alpha$ goes to $\infty$. Hence $p_0=0$.

Since $p_0=0$, we may assume that $p_+=1$. Again, 
$\alpha_+ \in \mathcal{F}_+$ and $\alpha_- \in \mathcal{F}_-$ imply
\( P(X_{\alpha_-} > X_{\alpha_+}) = 0\), that is contradictory to
\eqref{eq:def_max-compatible}. The thesis follows.
\end{proof}


\end{document}